\newtheorem{theorem}{Theorem}[section]
\newtheorem{corollary}[theorem]{Corollary}
\newtheorem{lemma}[theorem]{Lemma}
\newtheorem{proposition}[theorem]{Proposition}
\newfont{\roma}{cmr10 scaled 1200}
\newcommand{\nline}  {{\mathbb N}}
\newcommand{\rline}  {{\mathbb R}}
\newcommand{\sline}  {{\mathbb S}}
\newcommand{\dd}   {{\rm d}\hbox{\hskip 0.5pt}}
\newcommand{\Ascr} {{\cal A}}
\newcommand{\Dscr} {{\cal D}}
\newcommand{\Iscr} {{\cal I}}
\newcommand{\Pscr} {{\cal P}}
\newcommand{\mm}    {{\hbox{\hskip 0.5pt}}}
\newcommand{\m}     {{\hbox{\hskip 1pt}}}
\newcommand{\bluff} {{\hbox{\raise 15pt \hbox{\mm}}}}
\newcommand{\sbluff}{{\hbox{\raise  7pt \hbox{\mm}}}}
\newcommand{\FORALL} {{\hbox{$\hskip 11mm \forall \;$}}}
\newcommand{\bbm}[1]{\left[\begin{matrix} #1 \end{matrix}\right]}
\title{\LARGE \bf {On the convergence of a numerical scheme for a boundary controlled 1D linear parabolic PIDE \vspace{-2mm}}}
\author{Soham Chatterjee and  Vivek Natarajan\vspace{-2mm}% <-this %
%                                             stops a space
%\thanks{This work was partially supported by the Industrial Research and Consultancy Centre at IIT Bombay via the seed grant 16IRCCSG004 and the Science and Engineering Research Board, DST India, via the grant ECR/2017/002583.}
\thanks{S. Chatterjee (soham.chatterjee@iitb.ac.in) and V. Natarajan (vivek.natarajan@iitb.ac.in) are with the Centre for Systems and Control, Indian Institute of Technology Bombay, Mumbai, India, 400076, Ph:+912225765385.}%
%                                             this stops a space
}
\begin{document}
\maketitle
\thispagestyle{empty}
\pagestyle{empty}

\begin{abstract}
We consider an 1D partial integro-differential equation (PIDE) comprising of an 1D parabolic partial differential equation (PDE) and a nonlocal integral term. The control input is applied on one of the boundaries of the PIDE. Partitioning the spatial interval into $n+1$ subintervals and approximating the spatial derivatives and the integral term with their finite-difference approximations and Riemann sum, respectively, we derive an $n^{\rm th}$-order semi-discrete approximation of the PIDE. The $n^{\rm th}$-order semi-discrete approximation of the PIDE is an $n^{\rm th}$-order ordinary differential equation (ODE) in time. We establish some of its salient properties and using them prove that the solution of the semi-discrete approximation converges to the solution of the PIDE as $n\to\infty$. We illustrate our convergence results using numerical examples. The results in this work are useful for establishing the null controllability of the PIDE considered. %\vspace{2mm}
%{ {\bf \em Keywords---} Convergence, finite-difference scheme, Riemann sum, semi-discretization, uniform analyticity.}
\end{abstract}

%We establish the accuracy of this discretization scheme and prove the existence of a uniform in $n$ growth bound for the semigroup of the $n^{\rm th}$-order semi-discrete approximation.

%\begin{IEEEkeywords}
%Discrete-time system, finite-difference discretization, flatness, null controllability.
%\end{IEEEkeywords}

%%%%%%%%%%**********%%%%%%%%%%**********%%%%%%%%%%**********%%%%%%%%%%
%%%%%%%%%%**********%%%%%%%%%%**********%%%%%%%%%%**********%%%%%%%%%%
\section{Introduction} \label{sec1} %Section 1

The dynamics of many engineering processes evolve on an infinite-dimensional state-space and are often modelled using partial differential equations (PDEs) and partial integro-differential equations (PIDEs). Hence developing control algorithms for these models is a problem of practical interest. The early lumping approach to this problem is to construct an $n^{\rm th}$-order finite-dimensional approximation for the PDE/PIDE model, use the readily available ordinary differential equation (ODE) techniques to design a control signal for the approximate model, and finally show that as $n$ tends to infinity the control signal designed for the approximate model converges to a limiting control signal which solves the control problem for the PDE/PIDE model. This approach to the control of PDE models has been used to design stabilizing controllers in \cite{BaKu:1984}, \cite{BaIt:1991}, \cite{SiChNa:2022}, to design adaptive controllers and estimators in \cite{BaKr:2002}, \cite{BoBaKr:2003}, \cite{ChSuNa:2024}, \cite{ChSuNa:2025}, to study the controllability and observability in \cite{LeoZua::2002}, \cite{CiMiRo:2017}, \cite{LiuGuo::2021}, \cite{AyOzWa:2023} and to solve motion planning problems in \cite{UtMeKu:2007}, \cite{UtMeKu:2010},  \cite{ChNa:2020}, \cite{ChBaNa:2025}.

Recently in \cite{ChNa:2025}, using the early lumping approach, we solved a motion planning problem for an 1D parabolic PDE with discontinuous spatially-varying coefficients. In that work, we first obtained an $n^{\rm th}$-order semi-discrete approximation of the PDE by  partitioning the spatial interval into $n+1$ subintervals and replacing the spatial derivatives in the PDE with their finite-difference approximations. Then we constructed a control input which solved an appropriate motion planning problem for the $n^{\rm th}$-order semi-discrete approximation using the flatness technique. Finally, we showed that as $n$ tends to infinity the control input designed for the $n^{\rm th}$-order approximation converges to a limiting control input which solves the motion planning problem for the PDE. Our work yielded a new constructive proof for the null controllability of 1D parabolic PDEs with discontinuous spatially-varying coefficients.

In this work, we consider an 1D PIDE comprising of an 1D parabolic PDE with discontinuous spatially-varying coefficents and a nonlocal integral term. The control input is applied on one of the boundaries of the PIDE. Preliminary investigations suggest that the semi-discretization approach in \cite{ChNa:2025} can be adapted to establish the null controllability of the parabolic PIDE considered in this work. The approach in \cite{ChNa:2025} involves, in a crucial manner, establishing certain salient properties of the semi-discrete approximation of the parabolic PDE, see \cite[Sections III and VI]{ChNa:2025}. In this paper, we establish similar properties for the semi-discrete approximation of the parabolic PIDE of interest. Building on this, we plan to establish the null controllability of the parabolic PIDE (by adapting the techniques from \cite{ChNa:2025}) in a future work. We remark that the results in this work can potentially be used to solve other control problems for the 1D parabolic PIDE using the early lumping approach.

%We remark that few works in the literature have studied the null controllability of PIDEs.

Briefly, the technical results in this paper are as follows.
We construct an $n^{\rm th}$-order semi-discrete approximation of the parabolic PIDE by partitioning the spatial interval into $n+1$ subintervals and approximating the spatial derivatives and the integral term with their finite-difference approximations and Riemann sum, respectively. %This $n^{\rm th}$-order semi-discrete approximation is a uniform in $n$ bounded perturbation of the $n^{\rm th}$-order semi-discrete approximation derived in \cite{ChNa:2025}.
We then derive an accuracy estimate for the discretization scheme, establish the uniform in $n$ analyticity of the semi-discrete system and prove a certain Sobolev type inequality. Using these results, we prove that for any continuous initial state and any input which is thrice differentiable, the solution of the $n^{\rm th}$-order semi-discrete approximation converges in the $L^2$ norm to the solution of the PIDE as $n\to\infty$. We also show that the convergence occurs in a stronger norm when additional smoothness assumptions are imposed on the initial state and input.

The rest of this paper is organized as follows: We introduce the 1D parabolic PIDE of interest in Section \ref{sec2} and prove the regularity of its solutions. In Section \ref{sec3}, we derive an $n^{\rm th}$-order semi-discrete approximation for the 1D parabolic PIDE and establish some of its salient properties. We present the main convergence results of this paper in Section \ref{sec4}. Finally, in Section \ref{sec5} we illustrate the convergence results of Section \ref{sec4} via numerical examples.

\noindent
{\bf Notations}. A function $\psi:[0,1]\to\rline$ belongs to  $PC^k[0,1]$ if the following holds: there exists a finite partitioning of $[0,1]$ into disjoint intervals such that on each of these intervals $\psi$ is the restriction of some $C^k[0,1]$ function to that interval. For $\psi\in PC^k[0,1]$, we let $\|\psi\|_{PC^k[0,1]}$ to be the supremum of $\|\psi\|_{C^k[a,b]}$ over all the intervals $[a,b]\subset[0,1]$ on which $\psi$ is $k$-times continuously differentiable. We write $PC^0[0,1]$ as $PC[0,1]$. Let $H^2(0,1)$ be the usual Sobolev space of order $2$. A function $\psi\in C^1[0,1]\cap H^2(0,1)$ is in $PC^{(2),1}[0,1]$ if $\psi_{xx}\in PC^1[0,1]$. For $\psi\in PC^{(2),1}[0,1]$, we let $\|\psi\|_{PC^{(2),1}[0,1]}=  \|\psi\|_{C^1[0,1]}+ \|\psi_{xx}\|_{PC^1[0,1]}$.

For $v= [v_1 \ v_2 \ \cdots \ v_n]^\top\in\rline^n$, let $\|v\|_p=\big(\sum_{i=1}^n |v_i|^p\big)^\frac{1}{p}$ for $p\geq1$,
$\|v\|_\infty= \max_{1\leq j \leq n} |v_j|$ and $\|v\|_{2d}=\sqrt{h}\, \|v\|_2$, where $h=1/(n+1)$. For $A\in\rline^{n\times n}$, we let $\|A\|_{2d}=\sup_{\|v\|_{2d}=1}\|A v\|_{2d}$. The extension operator $S_n:\rline^n \to L^\infty(0,1)$ is defined as follows: $[S_n v](0)=v_1$, $[S_n v](x)=v_j$ for $j\in \{1,2,\ldots n\}$ and $(j-1)h < x \leq j h$ and $[S_n v](x)=0$ for $nh < x\leq 1$. For a function $z\in PC[0,1]$, we let $R_n z = [z(h) \ z(2h) \ \cdots \ z(nh)]^\top$. \vspace{-1mm}

%%%%%%%%%%**********%%%%%%%%%%**********%%%%%%%%%%**********%%%%%%%%%%
%%%%%%%%%%**********%%%%%%%%%%**********%%%%%%%%%%**********%%%%%%%%%%
\section{The parabolic PIDE and its solution}\label{sec2} %Section 2

In this paper, we consider the following 1D parabolic PIDE on the interval $x\in[0,1]$ and $t\geq0$: \vspace{-1mm}
\begin{align}
 & u_t(x,t) = \theta(x) u_{xx}(x,t) + \sigma(x) u_x(x,t) \nonumber\\
 & \quad\hspace{15mm} + \lambda(x)u(x,t) + \int_0^x \varphi(x,\tilde x) u(\tilde x,t) \dd \tilde x\, , \label{eq:heat1} \\
 & \alpha_0 u_x(0,t) + \beta_0 u(0,t)=0, \ \  \alpha_1 u_x(1,t) + \beta_1 u(1,t)= f(t), \label{eq:heat2}\\[-4.2ex]\nonumber
\end{align}
Here the coefficients $\theta$, $\sigma$ and $\lambda$ are in $PC^1[0,1]$ with $\inf_{x\in[0,1]}\theta(x)>0$, the kernel $\varphi\in C^1([0,1];C^1[0,1])$ and $f$ is the boundary input. We let $L^2(0,1)$ be the state-space for the parabolic PIDE \eqref{eq:heat1}-\eqref{eq:heat2}. Then the state operator associated with it is $\Pscr:\Dscr(\Pscr)\subset L^2(0,1)\to L^2(0,1)$, where $\Dscr(\Pscr) =\{ w \in H^2(0,1) \m\big|\m \alpha_0 w_x(0) + \beta_0 w(0) = \alpha_1 w_x(1) + \beta_1 w(1)=0 \}$ \vspace{-1.5mm} and
\begin{align}
 &\Pscr w(x) = \theta(x) w_{xx}(x) + \sigma(x) w_x(x) + \lambda(x) w(x) \nonumber\\
 &\hspace{15mm}  + \int_0^x \varphi(x,\tilde x) w(\tilde x) \dd \tilde x \FORALL w\in \Dscr(\Pscr). \label{eq:Pscr}\\[-4.2ex]\nonumber
\end{align}
The graph norm on $\Dscr(\Pscr)$ is $\|w\|_{\Dscr(\Pscr)}= \|\Pscr w\|_{L^2(0,1)}+\|w\|_{L^2(0,1)}$. For any $w\in H^2(0,1)$ we let $\Pscr w$ be the function obtained by applying $\Pscr$ (as an integro-differential operator) to $w$, even if $w\notin \Dscr(\Pscr)$. Consider the operator $\Ascr:\Dscr(\Pscr)\subset L^2(0,1)\to L^2(0,1)$ with \vspace{-1mm}
\begin{equation}\label{eq:Ascr}
\Ascr w(x) = \theta(x) w_{xx}(x) + \sigma(x) w_x(x) + \lambda(x) w(x) \vspace{-1mm}
\end{equation}
for all $w \in \Dscr(\Pscr)$. Clearly $\Pscr=\Ascr+\Phi$, where $\Phi$ is a bounded operator from $L^2(0,1)$ to itself given by \vspace{-1mm}
$$\Phi w(x)=\int_0^x \varphi(x,\tilde x)w(\tilde x)\dd \tilde x \FORALL x\in [0,1].\vspace{-1mm}$$
Since $\Ascr$ generates an analytic semigroup on $L^2(0,1)$, see the discussion above Eq. (3) in \cite{ChNa:2025}, and $\Phi$ is a bounded operator, it follows from the perturbation result \cite[Ch. III, Thm. 2.10]{EnNa:2006} that $\Pscr$ generates an analytic semigroup $\sline$ on $L^2(0,1)$. We write $u(\cdot,t)$ as $u(t)$. For any initial state $u_0\in L^2(0,1)$ and input $f\in C^3[0,T]$, the solution
of \eqref{eq:heat1}-\eqref{eq:heat2} on the interval $[0,T]$ is the $L^2(0,1)$-valued continuous function defined \vspace{-1.5mm} as
\begin{align}
 &u(t) = \sline_t \left[u_0 - \nu f(0) \right] + \int_0^t \sline_{t-\tau}\big[ \Pscr \nu f(\tau)- \nu \dot{f}(\tau)\big]\dd \tau \nonumber\\[-1ex]
 &\hspace{55mm} + \nu f(t). \label{eq:mildsoln}\\[-5.5ex]\nonumber
\end{align}
Here $\nu(x)=\mu_1x^{\mu_2}$ for all $x\in [0,1]$ with $\mu_1,\mu_2\in \rline$ chosen such that $\mu_2\geq 3$ and $\mu_1(\mu_2\alpha_1+\beta_1)=1$. The function $\Pscr \nu$ is obtained by applying $\Pscr$ (as an integro-differential operator) to $\nu$, even though $\nu\notin\Dscr(\Pscr)$. The next lemma is a regularity result which shows that, under certain assumptions, $u$ in \eqref{eq:mildsoln} satisfies the parabolic PIDE \eqref{eq:heat1}-\eqref{eq:heat2} pointwise. It is an analog of the result for the parabolic PDE in \cite[Lemma 3.1]{ChNa:2025}.

%%%%%%%%%%**********%%%%%%%%%%**********%%%%%%%%%%**********%%%%%%%%%%%
%--------------------------- Begin lemma --------------------------%
\begin{lemma}\label{lm:reg}
Suppose that $u_0\in L^2(0,1)$ and $f\in C^3[0,T]$. Then $u\in C([0,T];L^2(0,1))$ defined in \eqref{eq:mildsoln} is the unique function in $C((0,T];PC^{(2),1}[0,1])\cap C^1((0,T];C^1[0,1])$ which satisfies \eqref{eq:heat1}-\eqref{eq:heat2} for $t\in(0,T]$ and $u(\cdot,0)=u_0$.
\end{lemma}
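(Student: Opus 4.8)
The plan is to regard the nonlocal integral term in \eqref{eq:heat1} as a source term, which reduces the claim to the parabolic-PDE regularity result \cite[Lemma 3.1]{ChNa:2025}. Since $\Pscr=\Ascr+\Phi$ with $\Phi$ bounded on $L^2(0,1)$, the semigroup $\sline$ and the analytic semigroup $\Tscr$ generated by $\Ascr$ are related by the Dyson--Phillips identity $\sline_t=\Tscr_t+\int_0^t\Tscr_{t-s}\Phi\sline_s\,\dd s$. Substituting this into \eqref{eq:mildsoln}, using $\Pscr\nu=\Ascr\nu+\Phi\nu$, and rearranging (the double integral reassembles, by Fubini, into a single convolution of $\Tscr$ against $\Phi u$), one obtains
\begin{equation}\label{eq:volt}
 u(t)=u^{\Ascr}(t)+\int_0^t\Tscr_{t-\tau}\,\Phi u(\tau)\,\dd\tau,\qquad t\in[0,T],
\end{equation}
where $u^{\Ascr}$ is the function \eqref{eq:mildsoln} with $\Pscr,\sline$ replaced by $\Ascr,\Tscr$, i.e.\ $u^{\Ascr}$ is the mild solution of the parabolic PDE $u_t=\theta u_{xx}+\sigma u_x+\lambda u$ with the boundary conditions \eqref{eq:heat2}, initial state $u_0$ and input $f$. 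By \cite[Lemma 3.1]{ChNa:2025}, $u^{\Ascr}\in C((0,T];PC^{(2),1}[0,1])\cap C^1((0,T];C^1[0,1])$, it satisfies that PDE pointwise together with \eqref{eq:heat2} on $(0,T]$, and $u^{\Ascr}(\cdot,0)=u_0$.

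It then remains to show that $z(t):=\int_0^t\Tscr_{t-\tau}\Phi u(\tau)\,\dd\tau$ belongs to $C((0,T];PC^{(2),1}[0,1])\cap C^1((0,T];C^1[0,1])$, is a strong solution of $z_t=\Ascr z+\Phi u(t)$ on $(0,T]$ with $z(0)=0$, and has $z(t)$ satisfying the homogeneous version of \eqref{eq:heat2}; granting this, $u=u^{\Ascr}+z$ has the asserted regularity and, by \eqref{eq:volt}, satisfies \eqref{eq:heat1}--\eqref{eq:heat2} pointwise on $(0,T]$ with $u(\cdot,0)=u_0$. The observation driving this step is that $\varphi\in C^1([0,1];C^1[0,1])$ and $u\in C([0,T];L^2(0,1))$ force $\tau\mapsto\Phi u(\tau)$ to be continuous from $[0,T]$ into $C^{1/2}[0,1]$ with $\|\Phi u(\tau)\|_{C^{1/2}[0,1]}\leq c\,\|u(\tau)\|_{L^2(0,1)}$; since $C^{1/2}[0,1]$ embeds continuously into $\Dscr(\Ascr^\alpha)$ for some small $\alpha>0$, the family $\{\Phi u(\tau)\}_{\tau\in[0,T]}$ is bounded in $\Dscr(\Ascr^\alpha)$. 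The analytic estimate $\|\Ascr\Tscr_s\Ascr^{-\alpha}\|\leq c\,s^{\alpha-1}$, with $\alpha-1>-1$, then makes the integral representing $\Ascr z(t)$ absolutely convergent, so $z(t)\in\Dscr(\Ascr)$ for all $t$, and, $\Phi u$ being continuous in $t$, $z\in C^1([0,T];L^2(0,1))$ solves $z_t=\Ascr z+\Phi u(t)$, $z(0)=0$, with $z(t)\in\Dscr(\Pscr)$ enforcing the homogeneous boundary conditions. Upgrading this $H^2$-regularity of $z(t)$ to $z(t)\in PC^{(2),1}[0,1]$ and to $z\in C^1((0,T];C^1[0,1])$, with the correct blow-up rate as $t\downarrow0$, is carried out exactly as the corresponding step for the source term $\Pscr\nu f$ in the proof of \cite[Lemma 3.1]{ChNa:2025}, now with the smoother source $\Phi u$ in its place.

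For uniqueness, let $\tilde u$ be any element of $C((0,T];PC^{(2),1}[0,1])\cap C^1((0,T];C^1[0,1])\cap C([0,T];L^2(0,1))$ satisfying \eqref{eq:heat1}--\eqref{eq:heat2} for $t\in(0,T]$ with $\tilde u(\cdot,0)=u_0$. The choice $\mu_2\geq 3$, $\mu_1(\mu_2\alpha_1+\beta_1)=1$ makes the lift $\tilde v:=\tilde u-\nu f$ satisfy the homogeneous boundary conditions, so $\tilde v(t)\in\Dscr(\Pscr)$ and $\tilde v$ is a strong solution on $(0,T]$ of $\dot{\tilde v}=\Pscr\tilde v+\Pscr\nu f-\nu\dot f$; differentiating $s\mapsto\sline_{t-s}\tilde v(s)$ and integrating from $\varepsilon$ to $t$, then letting $\varepsilon\downarrow0$ and using $\tilde v\in C([0,T];L^2(0,1))$, gives $\tilde v(t)=\sline_t\tilde v(0)+\int_0^t\sline_{t-\tau}[\Pscr\nu f(\tau)-\nu\dot f(\tau)]\,\dd\tau$, i.e.\ $\tilde u=u$. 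The main obstacle in the whole argument is the second step: transferring the sharp spatial regularity of parabolic solutions (piecewise-$C^1$ second derivative and $C^1$ smoothness up to $x=0,1$, uniformly, with the right singularity as $t\downarrow0$) through the Volterra convolution with $\Tscr$ — but, as indicated, this is precisely the estimate already established in \cite[Lemma 3.1]{ChNa:2025}, here applied to the source $\Phi u$; alternatively it can be obtained in a self-contained way by Picard iteration on \eqref{eq:volt}, each iterate inheriting the required regularity and the series converging by the usual $(ct)^k/k!$ bound since $\Phi$ and $\Tscr$ are bounded.
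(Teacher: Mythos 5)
Your strategy --- peel off the nonlocal term via the Dyson--Phillips identity, invoke \cite[Lemma 3.1]{ChNa:2025} for the pure-PDE part $u^{\Ascr}$, and treat the Volterra correction $z(t)=\int_0^t\Tscr_{t-\tau}\Phi u(\tau)\,\dd\tau$ separately --- is not the paper's route, and it leaves the hardest step unproved. The crux of the lemma is precisely the membership $z\in C((0,T];PC^{(2),1}[0,1])\cap C^1((0,T];C^1[0,1])$, and you defer it to ``the corresponding step for the source term $\Pscr\nu f$'' in \cite{ChNa:2025}. That analogy does not transfer verbatim: there the source is a fixed smooth spatial profile times a $C^3$ function of $t$, so the Duhamel integral can be differentiated in $t$ starting from $t=0$; here the source is $\Phi u(\tau)$, whose time derivative $\Phi\dot u(\tau)$ blows up like $1/\tau$ as $\tau\downarrow0$ (the analytic-semigroup rate for $u_0\in L^2(0,1)$ only), so $\int_0^t\Tscr_{t-\tau}\Phi\dot u(\tau)\,\dd\tau$ is not absolutely convergent and one must restart the convolution from some $\rho>0$. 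Moreover $\Phi u(t)$ does not satisfy the homogeneous boundary conditions (e.g.\ $\alpha_0(\Phi u)_x(0,t)=\alpha_0\varphi(0,0)u(0,t)$ need not vanish), so $\Phi u(t)\notin\Dscr(\Ascr)$ and the second round of smoothing needed to get $\Ascr\dot z\in L^2(0,1)$ --- hence $\dot z(t)\in C^1[0,1]$, which your bootstrap to $z_{xx}\in PC^1[0,1]$ requires --- is not immediate. The Picard-iteration fallback has the same defect: the $(ct)^k/k!$ bound controls the $L^2$ norms of the iterates, not their $PC^{(2),1}$ norms, whose series is what would have to converge, with the correct $1/t$ blow-up tracked through every iterate.

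The paper's proof sidesteps all of this: since $\Phi$ is bounded on $L^2(0,1)$, $\Pscr=\Ascr+\Phi$ itself generates an analytic semigroup $\sline$, so Pazy's classical-solution theorems \cite[Ch.~4, Cor.~2.5 and Thm.~3.5(i)]{Pazy:1983} apply directly to the lifted variable $w=u-\nu f$ and its mild solution \eqref{eq:msw}, yielding $\Pscr w,\dot w,\Pscr\dot w\in C((0,T];L^2(0,1))$ with no splitting. The $PC^{(2),1}$ regularity is then read off the equation \eqref{eq:wheat1} itself: $\theta w_{xx}$ equals $\dot w$ minus the remaining terms, and the extra term $\Phi w(\cdot,t)$ is automatically in $C^1[0,1]\subset PC^1[0,1]$ once $w(\cdot,t)\in C[0,1]$, so it costs nothing in the bootstrap. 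Your uniqueness argument is correct and is exactly what the paper's citation of \cite[Ch.~4, Cor.~3.3]{Pazy:1983} encapsulates; but as written, the proof of the main regularity assertion is missing, and if you wish to keep the decomposition you must actually carry out the $PC^{(2),1}$ and $C^1((0,T];C^1[0,1])$ estimates for $z$, restarting from $\rho>0$ and dealing with the boundary incompatibility of $\Phi u$.
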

%%%%%%%%%%**********%%%%%%%%%%**********%%%%%%%%%%**********%%%%%%%%%%%
%----------------------------- End lemma --------------------------%
%%%%%%%%%%%%%%%%%%%%%%%%%%%%-- Proof --%%%%%%%%%%%%%%%%%%%%%%%%%%%%%
\begin{proof}
Let $w(x,t)= u(x,t) - \nu(x) f(t)$. Then a formal calculation using \eqref{eq:heat1}-\eqref{eq:heat2} gives that $w$ satisfies \vspace{-1mm}
\begin{align}
 &w_t(x,t) = \theta(x) w_{xx}(x,t) + \sigma(x) w_x(x,t) + \lambda(x) w(x,t) \nonumber\\[-0.5ex]
 & \quad + \int_0^x \varphi(x,\tilde x) w(\tilde x,t) \dd \tilde x + \Pscr \nu(x) f(t) - \nu(x) \dot f(t), \label{eq:wheat1} \\[-0.5ex]
 &\alpha_0 w_x(0,t) + \beta_0 w(0,t)=0,  \quad \alpha_1 w_x(1,t) + \beta_1 w(1,t)= 0. \label{eq:wheat2} \\[-5ex]\nonumber
\end{align}
The mild solution of \eqref{eq:wheat1}-\eqref{eq:wheat2} is \vspace{-1mm}
\begin{equation} \label{eq:msw}
 w(t)= \sline_t w_0 + \int_0^t \sline_{t-\tau} [\Pscr \nu f(\tau) - \nu \dot{f}(\tau)]\dd \tau, \vspace{-1mm}
\end{equation}
where $w_0=u_0-\nu f(0)\in L^2(0,1)$. Using \cite[Ch. 4, Corollary 2.5 and Thm 3.5 (i)]{Pazy:1983} we get that $\Pscr w, \dot w \in C((0,T];L^2(0,1))$ and $w$ satisfies \eqref{eq:wheat1}-\eqref{eq:wheat2} for $t>0$. Changing the variable from $\tau$ to $s=t-\tau$ in the integral in \eqref{eq:msw}, then differentiating \eqref{eq:msw} and then changing the variable back from $s$ to $\tau = t-s$ we obtain \vspace{-1mm} that
\begin{align*}
&\dot w(t) = \Pscr \sline_t w_0 + \sline_t(\Pscr\nu f(0)-\nu\dot f(0))\\
&\hspace{30mm} + \int_0^t \sline_{t-\tau} [\Pscr \nu \dot f(\tau) - \nu \ddot{f}(\tau)]\dd \tau \\[-4.5ex]
\end{align*}
for $t>0$. Since $\Pscr^2\sline_t w_0 \in C((0,T];L^2(0,1))$ (because $\sline$ is an analytic semigroup), again using \cite[Ch. 4, Thm. 3.5 (i)]{Pazy:1983} to deduce the regularity of the remaining terms on the right-side of the above expression, we get that $\Pscr \dot w \in C((0,T];L^2(0,1))$. So $\dot w \in C((0,T]; C^1[0,1])$. Using this and $w\in C((0,T];\Dscr(\Pscr))$ shown earlier, it follows that $\sigma w_x\in C((0,T];PC[0,1])$ and all  terms in \eqref{eq:wheat1} except $\sigma w_x$ and $\theta w_{xx}$ are in $C((0,T];PC^1[0,1])$. Hence $\theta w_{xx}\in C((0,T];PC[0,1])$, which implies that $\sigma w_x\in C((0,T]; PC^1[0,1])$. So all terms in \eqref{eq:wheat1} except $\theta w_{xx}$ are in $C((0,T];PC^1[0,1])$. Consequently $\theta w_{xx}\in C((0,T];PC^1[0,1])$ and so $w\in C((0,T]; PC^{(2),1}[0,1])$.

From the above discussion, in particular $u(x,t)= w(x,t) + \nu(x) f(t)$ and \eqref{eq:msw}, we get that $u$ defined in \eqref{eq:mildsoln} is in $C((0,T];PC^{(2),1}[0,1])\cap C^1((0,T];C^1[0,1])$ and satisfies \eqref{eq:heat1}-\eqref{eq:heat2} for $t>0$. The uniqueness of $u$ follows easily from \cite[Ch. 4, Corollary 3.3]{Pazy:1983}.
\end{proof}
%%%%%%%%%%%%%%%%%%%%%%%%%-- End Proof --%%%%%%%%%%%%%%%%%%%%%%%%%%%%
\vspace{-1mm}

%%%%%%%%%%**********%%%%%%%%%%**********%%%%%%%%%%**********%%%%%%%%%%
%%%%%%%%%%**********%%%%%%%%%%**********%%%%%%%%%%**********%%%%%%%%%%
\section{Semi-discrete approximation}\label{sec3} %Section3

A semi-discrete approximation of \eqref{eq:heat1}-\eqref{eq:heat2} was obtained in \cite{ChNa:2025} in the absence of the integral term by approximating the spatial derivatives in \eqref{eq:heat1}-\eqref{eq:heat2} using their finite-differences. Using it and by approximating the integral term with its Riemann sum we arrive at the following $n^{\rm th}$-order semi-discrete approximation of \eqref{eq:heat1}-\eqref{eq:heat2}: \vspace{-1mm}
\begin{equation} \label{eq:semi_disc}
 \dot v_n(t) = P_n v_n (t) + B_n f_n(t) \!\! \FORALL t\in[0,T]. \vspace{-2mm}
\end{equation}
Here $v_n(t)\in\rline^n$ and $P_n\in \rline^{n\times n}$ and $B_n\in \rline^{n\times 1}$ are defined as follows: $P_n = \Theta_n L_n + \Sigma_n D_n + \Lambda_n + \Phi_n$ with $\Theta_n, L_n,\Sigma_n, D_n,\Lambda_n,\Phi_n\in\rline^{n\times n}$, \vspace{-1mm}
$$\Theta_n = \text{diag}[\theta(h) \ \ \theta(2h) \ \ \cdots \ \  \theta(nh)], \vspace{-1mm}$$
$$\Sigma_n = \text{diag}\bbm{\sigma(h) \ \ \sigma(2h) \ \ \cdots \ \  \sigma(nh)}, \vspace{-1mm}$$
$$\Lambda_n = \text{diag}\bbm{\lambda(h) \ \ \lambda(2h) \ \ \cdots \ \ \lambda(nh)}, \vspace{-1mm}$$
$${\small L_n\! =\! \frac{1}{h^2}\!
      \bbm{
        4 r_0-2 & 1-r_0 & 0 & 0 & \hspace{2mm}\cdots &0 \\
        1 & -2 & 1 & 0 &\hspace{2mm}\cdots &0 \\
        0 &  1 &-2 & 1 &  &0  \\
        \vdots  & \hspace{3.5mm}\ddots & \hspace{1mm} \ddots & \hspace{1mm} \ddots & &\vdots \\
        0 &  & \hspace{-8mm}1 & \hspace{-8.5mm}-2 & \hspace{-8.5mm}1 &0 \\
        0 &  \hspace{-8mm}\cdots &\hspace{-8.5mm}0 & \hspace{-8.5mm}1  &\hspace{-8.5mm}-2 &1\\
        0 & \hspace{-8mm}\cdots &\hspace{-8.5mm}0 &\hspace{-8.5mm}0 &\hspace{-7mm} 1-r_1 &4 r_1-2}, \vspace{-1mm}}$$
$${\small D_n =\! \frac{1}{h}\!
    \bbm{
      h q_0& 0 & \cdots & \cdots &\cdots &0 \\
      -1 & 1 & 0 &\cdots &\cdots &0 \\
      0 &  -1 &1 &0 &\cdots &0  \\
      \vdots  & \ddots & \ddots & \ddots & \ddots & \vdots \\
      0 & \cdots  &0 & -1 & 1 &0 \\
      0 & \cdots & \cdots &0  &-1 &1}, \ \
  B_n \!=\! \frac{1}{h^2}\!\bbm{0 \\ 0 \\ \vdots \\ 0 \\ 0 \\ b_n}, \vspace{-1mm}}$$
$${\small \Phi_n = h\bbm{\varphi(h,h) &0 &\cdots &0\\
      \varphi(2h,h) &\varphi(2h,2h) &\cdots &0\\
      \vdots  &\vdots  &\ddots &0\\
      \varphi(nh,h) &\varphi(nh,2h) &\cdots &\varphi(nh,nh)}.\vspace{-0.5mm}}$$
Here $h=1/(n+1)$, $r_0=\alpha_0/(3\alpha_0-2h\beta_0)$, $r_1=\alpha_1/(3\alpha_1+2h \beta_1)$, $q_0 =-\beta_0 / (\alpha_0 -h\beta_0)$ and $b_n = 2h\theta(nh)/(3\alpha_1 + 2h\beta_1)$. While $r_0$, $r_1$, $q_0$ and $b_n$ are well-defined for all $n\gg1$, we assume that they are well-defined for all $n$. This is reasonable since we are interested in the solutions of \eqref{eq:semi_disc} in the limit $n\to\infty$. A simple calculation gives \vspace{-1.5mm}
\begin{equation} \label{eq:PhiBD}
 \|\Phi_n\|_{2d} \leq \sup_{x,\tilde x\in [0,1]}|\varphi(x,\tilde x)| \FORALL n\geq 1. \vspace{-1.5mm}
\end{equation}

The state operator of the $n^{\rm th}$-order semi-discrete approximation of the parabolic PDE in \cite{ChNa:2025} is \vspace{-1.5mm}
\begin{equation}\label{eq:An}
 A_n = \Theta_n L_n + \Sigma_n D_n + \Lambda_n. \vspace{-1.5mm}
\end{equation}
Clearly, $P_n = A_n + \Phi_n$. In this section, we present three results about $P_n$: In Lemma \ref{lm:approxAq} we derive an estimate for the accuracy of the approximation $P_n$ of $\Pscr$. We establish the uniform in $n$ analyticity of $e^{P_n t}$ in Proposition \ref{pr:eigenA} and prove a certain Sobolev type inequality involving $P_n$ in Lemma \ref{lm:disc_sob}. We have derived analogous results for $A_n$ in \cite{ChNa:2025}. We use them along with the fact that $P_n$ and $A_n$ differ by a bounded perturbation to establish the results in this section.

Let $\Iscr$ be the collection of all the points where either $\theta$, $\sigma$ or $\lambda$ is not differentiable (so $\Iscr$ contains the points where these functions are discontinuous).

%%%%%%%%%%**********%%%%%%%%%%**********%%%%%%%%%%**********%%%%%%%%%%%
%-------------------------- Begin lemma ---------------------------%
\begin{lemma}\label{lm:approxAq}
Fix $T>0$ and let $\rho\in[0,T)$. Consider $\xi \in C([\rho,T],PC^{(2),1}[0,1])$ satisfying $\alpha_0 \xi_x(0,t) + \beta_0 \xi(0,t)=0$. Define $f_\xi(t)=\alpha_1 \xi_x(1,t) + \beta_1 \xi(1,t)$. Let the points at which $\xi_{xx}(\cdot,t)$ is not differentiable be a subset of $\Iscr$ for all $t\in[\rho,T]$. Then, for all $n\gg 1$ there exists a $C$ independent of $\xi$ and $n$ such that
\vspace{-2mm}
\begin{align}
&\sup_{t\in[\rho,T]}\|R_n \Pscr \xi(\cdot,t) - P_n R_n \xi(\cdot,t) - B_n f_\xi(t)\|_{2d} \nonumber\\[-1ex]
  &\hspace{5mm}\leq\ C \sqrt{h} \sup_{t\in[\rho,T]} \|\xi(\cdot,t)\|_{PC^{(2),1}[0,1]} \label{eq:pwiseJ1one} \\[-4ex]\nonumber
\end{align}
\end{lemma}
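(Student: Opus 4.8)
The plan is to reduce the estimate to the two pieces in the decompositions $\Pscr=\Ascr+\Phi$ and $P_n=A_n+\Phi_n$. Writing
\[
 R_n\Pscr\xi(\cdot,t)-P_nR_n\xi(\cdot,t)-B_nf_\xi(t) = \big(R_n\Ascr\xi(\cdot,t)-A_nR_n\xi(\cdot,t)-B_nf_\xi(t)\big) + \big(R_n\Phi\xi(\cdot,t)-\Phi_nR_n\xi(\cdot,t)\big),
\]
the first bracket is exactly the quantity controlled by the analog of this lemma for the parabolic PDE proved in \cite{ChNa:2025}; the hypotheses imposed here on $\xi$ — namely $\xi(\cdot,t)\in PC^{(2),1}[0,1]$, the boundary relation $\alpha_0\xi_x(0,t)+\beta_0\xi(0,t)=0$, and the inclusion of the non-differentiability points of $\xi_{xx}(\cdot,t)$ in $\Iscr$ — are precisely those required to invoke it, and it yields a bound $C\sqrt{h}\,\sup_t\|\xi(\cdot,t)\|_{PC^{(2),1}[0,1]}$ with $C$ independent of $\xi$ and $n$. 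It then remains only to bound the second bracket, the Riemann-sum error for the integral operator, by $Ch\,\sup_t\|\xi(\cdot,t)\|_{PC^{(2),1}[0,1]}$, which is no larger than $C\sqrt{h}\,\sup_t\|\xi(\cdot,t)\|_{PC^{(2),1}[0,1]}$ since $h\le1$.

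For that estimate, fix $t\in[\rho,T]$ and $j\in\{1,\dots,n\}$ and set $g_j(\tilde x)=\varphi(jh,\tilde x)\xi(\tilde x,t)$; then the $j$-th component of the second bracket is $\int_0^{jh}g_j(\tilde x)\dd\tilde x-h\sum_{k=1}^{j}g_j(kh)$, the error of a right-endpoint Riemann sum. Since $\varphi\in C^1([0,1];C^1[0,1])$ and $\xi(\cdot,t)\in PC^{(2),1}[0,1]\subset C^1[0,1]$, the function $g_j$ is $C^1$ on $[0,1]$ with $\|g_j'\|_\infty\le C_\varphi\|\xi(\cdot,t)\|_{PC^{(2),1}[0,1]}$, where $C_\varphi$ depends only on $\sup|\varphi|$ and $\sup|\partial_{\tilde x}\varphi|$ (and uses $\|\xi(\cdot,t)\|_{C^1[0,1]}\le\|\xi(\cdot,t)\|_{PC^{(2),1}[0,1]}$). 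Writing the error as $\sum_{k=1}^{j}\int_{(k-1)h}^{kh}\big(g_j(\tilde x)-g_j(kh)\big)\dd\tilde x$ and estimating each subinterval integral by $\tfrac12 h^2\|g_j'\|_\infty$ gives, using $jh\le1$, that this component is at most $\tfrac12 h\,C_\varphi\|\xi(\cdot,t)\|_{PC^{(2),1}[0,1]}$ in modulus. Hence, using $nh\le1$,
\[
 \|R_n\Phi\xi(\cdot,t)-\Phi_nR_n\xi(\cdot,t)\|_{2d}^2 = h\sum_{j=1}^n\big|\,[R_n\Phi\xi(\cdot,t)-\Phi_nR_n\xi(\cdot,t)]_j\,\big|^2 \le h\,n\,\Big(\tfrac12 h\,C_\varphi\|\xi(\cdot,t)\|_{PC^{(2),1}[0,1]}\Big)^2 \le \Big(\tfrac12 h\,C_\varphi\|\xi(\cdot,t)\|_{PC^{(2),1}[0,1]}\Big)^2,
\]
so taking square roots and then the supremum over $t\in[\rho,T]$ controls the second bracket by $\tfrac12 C_\varphi h\,\sup_t\|\xi(\cdot,t)\|_{PC^{(2),1}[0,1]}$.

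Combining the two bounds with the triangle inequality and absorbing constants gives \eqref{eq:pwiseJ1one}. I do not expect a genuine obstacle: the argument is routine once the PDE estimate of \cite{ChNa:2025} is in hand. The only points requiring care are verifying that the hypotheses of the present lemma match exactly those needed for the $A_n$ estimate (in particular the condition on the non-differentiability points of $\xi_{xx}(\cdot,t)$, which is what makes the finite-difference truncation error $O(\sqrt{h})$ rather than worse), and keeping all constants uniform in $\xi$ and $n$ — which is automatic here because every bound is expressed through $\sup|\varphi|$, $\sup|\partial_{\tilde x}\varphi|$, and $\|\xi(\cdot,t)\|_{PC^{(2),1}[0,1]}$, and because $jh\le1$ and $nh\le1$ hold for every $n$.
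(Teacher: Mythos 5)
Your proposal is correct and follows essentially the same route as the paper: split off the $\Ascr/A_n$ part and invoke the PDE estimate of \cite{ChNa:2025}, then bound the remaining term $R_n\Phi\xi-\Phi_nR_n\xi$ componentwise as a Riemann-sum error of order $h$ using the $C^1$ regularity of $\varphi$ and $\xi(\cdot,t)$, convert to the $\|\cdot\|_{2d}$ norm, and conclude by the triangle inequality. The paper's estimate \eqref{eq:newest} is exactly your per-component Riemann-sum bound, so no further comment is needed.
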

%%%%%%%%%**********%%%%%%%%%%**********%%%%%%%%%%**********%%%%%%%%%%%
%---------------------------- End lemma ---------------------------%

%%%%%%%%%%%%%%%%%%%%%%%%%%%%%%%%%-Proof-%%%%%%%%%%%%%%%%%%%%%%%%%%%%%%%%%
\begin{proof}
Fix $n\gg 1$. Recall the operator $\Ascr$ from \eqref{eq:Ascr} and matrix $A_n$ from \eqref{eq:An}. Note that $\|v\|_{2d}=\sqrt{h}\|v\|_2$ for any $v\in\rline^n$. From \cite[Lemma 3.2]{ChNa:2025} we have \vspace{-1.2mm}
\begin{align}
  &\sup_{t\in[\rho,T]}\|R_n \Ascr \xi(\cdot,t) - A_n R_n \xi(\cdot,t) - B_n f_\xi(t)\|_{2d} \nonumber\\[-1ex]
  &\hspace{5mm}\leq C_1\sqrt{h} \sup_{t\in[\rho,T]} \|\xi(\cdot,t)\|_{PC^{(2),1}[0,1]} \label{eq:pwiseJ1oneA} \\[-4.7ex]\nonumber
\end{align}
for some $C_1>0$ independent of $n$ and $\xi$. For all $j\in \{1,2,\ldots n\}$ and $t\in [\rho,T]$ we also have
\begin{align}
  &\big|[R_n (\Pscr-\Ascr) \xi(\cdot,t)-(P_n-A_n)R_n \xi(\cdot,t)]_j\big|\nonumber\\[-0.5ex]
  =&\Big|\sum_{m=1}^{j}\int_{mh-h}^{mh} \big[\varphi(jh,\tilde x)\xi(\tilde x,t) - \varphi(jh,mh)\xi(mh,t)\big] \dd \tilde x \Big| \nonumber\\[-0.5ex]
  \leq& h \sup_{x\in[0,1]} \|\varphi(x,\cdot)\|_{C^1[0,1]} \sup_{t\in[\rho,T]} \|\xi(\cdot,t)\|_{C^1 [0,1]}. \label{eq:newest} \\[-4.5ex]\nonumber
\end{align}
Here we have used $[\cdot]_j$ to denote the $j^{\rm th}$ component of vectors in $\rline^n$. From the above estimate we can conclude \vspace{-1.2mm} that
\begin{align*}
 &\sup_{t\in [\rho,T]}\|[R_n (\Pscr-\Ascr) -(P_n-A_n)R_n] \xi(\cdot,t)\|_{2d} \nonumber\\[-1.5ex]
 &\hspace{10mm}\leq C_2\sqrt{h}\sup_{t\in [\rho,T]}\|\xi(\cdot,t)\|_{PC^{(2),1}[0,1]}\\[-5ex]
\end{align*}
for some $C_2>0$ independent of $n$ and $\xi$. The estimate \eqref{eq:pwiseJ1one} now follows from \eqref{eq:pwiseJ1oneA} and \eqref{eq:newest} via the triangle inequality.
\end{proof}

%%%%%%%%%%%%%%%%%%%%%%%%%%%%%%%%%-End-%%%%%%%%%%%%%%%%%%%%%%%%%%%%%%%%%

Next we show that the semigroup $e^{P_n t}$ has a uniform in $n$ growth bound and is also uniformly analytic in $n$. Given a $T>0$, from \cite[Prop. 6.1]{ChNa:2025} we get that  there exists a $C_T>0$ independent of $n$ and $k$ such \vspace{-1.2mm} that
\begin{equation}\label{eq:AnenT}
 \|A_n^k e^{A_n t}\|_{2d} \leq\!  \frac{C_T^{k+1} k!}{t^k} \vspace{-1.2mm}
\end{equation}
for all $t\in (0,T]$ and each $k\geq 0$ and $n\gg 1$. \vspace{1mm}
%%%%%%%%%%%%%%%%%%%%%%%%%%%%%- begin prop -%%%%%%%%%%%%%%%%%%%%%%%%%%%
\begin{proposition}\label{pr:eigenA}
Consider the $n^{\rm th}$-order ODE \vspace{-1.2mm}
\begin{equation} \label{eq:lmODE}
 \dot v(t) =  P_n v(t) + \eta(t), \qquad v(0)=v_0, \vspace{-1.2mm}
\end{equation}
with $\eta\in L^\infty([0,\infty);\rline^n)$. Then, there exists $M,\omega>0$ independent of $n$ such that the solution $v$ to the ODE \vspace{-1mm} satisfies
\begin{equation}
 \|v(t)\|_{2d} \leq M \Big(\!e^{\omega t} \|v_0\|_{2d}+\int_0^t\!\!  e^{\omega (t-\tau)} \| \eta(\tau)\|_{2d}\ \dd\tau\!\Big) \ \ \forall \ t\geq 0. \label{eq:lmbnd} \vspace{-0.5mm}
\end{equation}
Furthermore, for any $T>0$ there exists an $M_T>0$ independent of $n$ and $t$ such that \vspace{-1.5mm}
\begin{equation}\label{eq:analytic}
 \|P_n^ke^{P_n t}\|_{2d} \leq \frac{M_T^{k+1}k!}{t^k} \vspace{-1mm}
\end{equation}
for all $t\in (0,T]$ and each $k\geq 0$ and $n\gg 1$.
\end{proposition}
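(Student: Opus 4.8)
The proof is a bounded-perturbation argument built on $P_n = A_n + \Phi_n$, where $\|\Phi_n\|_{2d}\le b:=\sup_{x,\tilde x\in[0,1]}|\varphi(x,\tilde x)|$ uniformly in $n$ by \eqref{eq:PhiBD}, and $e^{A_n t}$ is analytic uniformly in $n$ by \eqref{eq:AnenT}. I would prove \eqref{eq:lmbnd} and \eqref{eq:analytic} separately, in each case transferring a uniform-in-$n$ property of $A_n$ to $P_n$ using only that $\Phi_n$ is a uniformly bounded perturbation.

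\emph{Growth bound.} Taking $k=0$ in \eqref{eq:AnenT} gives $\|e^{A_n t}\|_{2d}\le C_T$ for $t\in(0,T]$ and $n\gg1$; writing an arbitrary $t\ge0$ as a sum of times $\le T$ and using the semigroup property yields $\|e^{A_n t}\|_{2d}\le M_0 e^{\omega_0 t}$ for all $t\ge0$, with $M_0,\omega_0\ge0$ independent of $n$. The variation-of-parameters identity $e^{P_n t}=e^{A_n t}+\int_0^t e^{A_n(t-s)}\Phi_n e^{P_n s}\,\dd s$ together with Gronwall's inequality then gives $\|e^{P_n t}\|_{2d}\le M e^{\omega t}$ for all $t\ge0$, with $M=M_0$ and $\omega=\omega_0+M_0 b$ independent of $n$. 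Writing the solution of \eqref{eq:lmODE} as $v(t)=e^{P_n t}v_0+\int_0^t e^{P_n(t-\tau)}\eta(\tau)\,\dd\tau$ and applying $\|\cdot\|_{2d}$ gives \eqref{eq:lmbnd}.

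\emph{Analyticity estimate.} The estimate \eqref{eq:AnenT}, together with the exponential bound for $e^{A_n t}$ just obtained, is equivalent---by the standard characterization of analytic semigroups via sectorial resolvent estimates on the generator---to the following: there exist $\delta\in(0,\pi/2)$, $a\in\rline$ and $K>0$, all independent of $n$, such that the sector $\Sigma_a$ with vertex $a$ and half-angle $\tfrac{\pi}{2}+\delta$ lies in the resolvent set of $A_n$ and $\|(zI-A_n)^{-1}\|_{2d}\le K/|z-a|$ for all $z\in\Sigma_a$ and all $n\gg1$. For $z\in\Sigma_a$ with $|z-a|\ge 2Kb$ one has $\|\Phi_n(zI-A_n)^{-1}\|_{2d}\le\tfrac12$, so a Neumann series shows $z\in\rho(P_n)$ and $\|(zI-P_n)^{-1}\|_{2d}\le 2K/|z-a|$. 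Since a sector $\Sigma_{a'}$ with half-angle $\tfrac{\pi}{2}+\delta'$, $\delta'\in(0,\delta)$, and vertex $a'$ sufficiently far to the right of $a$ is contained in $\{z\in\Sigma_a:|z-a|\ge 2Kb\}$, we obtain $\|(zI-P_n)^{-1}\|_{2d}\le K'/|z-a'|$ on $\Sigma_{a'}$ with $\delta',a',K'$ independent of $n$. Representing $P_n^k e^{P_n t}=\tfrac{1}{2\pi i}\int_\Gamma z^k e^{zt}(zI-P_n)^{-1}\,\dd z$ over the usual Dunford contour $\Gamma\subset\Sigma_{a'}$ whose circular part has radius of order $1/t$, estimating the three pieces in the standard way, and using $t\le T$ to absorb the factor $e^{a't}\le e^{a'T}$, yields \eqref{eq:analytic} with $M_T$ independent of $n$ and $t$.

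\emph{Where the work is.} No new idea is needed; the entire point is to check that each classical step---the passage from the analyticity estimate \eqref{eq:AnenT} (available only on the finite interval $(0,T]$) to a uniform sectorial resolvent bound for $A_n$, the Neumann-series perturbation by $\Phi_n$, and the Dunford-integral estimate for $e^{P_n t}$---produces constants depending only on $C_T$, on $b=\sup|\varphi|$, and on the (uniform) sectorial data of $A_n$, and not on $n$. Equivalently, one may invoke the bounded-perturbation theorem for analytic semigroups \cite[Ch. III, Thm. 2.10]{EnNa:2006} and observe that its proof is quantitative, the analyticity constants for $P_n$ being controlled by those for $A_n$ and by $\|\Phi_n\|_{2d}$; since both are uniform in $n$, so are the constants $M,\omega$ and $M_T$ in the statement. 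I expect this uniformity bookkeeping to be the only real content of the proof.
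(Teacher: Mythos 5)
Your proof of the growth bound \eqref{eq:lmbnd} is essentially the paper's: the paper also obtains a uniform exponential bound for $e^{A_nt}$, perturbs it to $e^{P_nt}$ via the bounded-perturbation theorem (whose proof is exactly your variation-of-parameters-plus-Gronwall identity), and then applies the bound to the Duhamel formula for $v$. For the analyticity estimate \eqref{eq:analytic}, however, you take a genuinely different route. The paper stays entirely in the time domain: it sets $v(t)=tP_ne^{P_nt}z$, derives the integral equation $v(t)=\int_0^t e^{A_n(t-\tau)}\Phi_nv(\tau)\,\dd\tau+\int_0^t e^{A_n(t-\tau)}\tfrac{v(\tau)}{\tau}\,\dd\tau$, splits the singular integral at $\tau=t/2$ (rewriting the piece near $\tau=0$ as $\int_0^{t/2}e^{A_n(t-\tau)}P_ne^{P_n\tau}z\,\dd\tau$ and bounding it with $\|e^{A_ns}P_n\|_{2d}\leq C_a/s$ from \eqref{eq:AnenT}), applies Gronwall to get $\|P_ne^{P_nt}\|_{2d}\leq N_T/t$, and then obtains general $k$ from the identity $P_n^ke^{P_nt}=(P_ne^{P_nt/k})^k$ together with $k^k\leq k!\,e^k$. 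You instead pass through the frequency domain: convert \eqref{eq:AnenT} into a uniform sectorial resolvent estimate for $A_n$, perturb the resolvent by a Neumann series, and recover \eqref{eq:analytic} from the Dunford contour integral. Your route is correct and is the classical one (it is in effect a quantitative rereading of \cite[Ch.~III, Thm.~2.10]{EnNa:2006}, which the paper already invokes for $\Pscr$); its cost is exactly the bookkeeping you flag, namely verifying that the constants in the equivalence between the time-domain estimate on $(0,T]$ and the sectorial resolvent bound depend only on $C_T$ and the growth constants, not on $n$. The paper's argument avoids resolvent machinery altogether, so every constant is written down explicitly and the uniformity in $n$ is manifest; your argument is more modular and would apply verbatim to any uniformly bounded perturbation of a uniformly sectorial family. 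Both deliver the stated form $M_T^{k+1}k!/t^k$; just make sure that in your contour estimate you either shift to $P_n-a'I$ before extracting the $k!$ or otherwise account for the vertex $a'\neq 0$ when bounding $\|P_n^k e^{P_n t}\|_{2d}$ rather than $\|(P_n-a'I)^k e^{P_n t}\|_{2d}$.
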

%%%%%%%%%%%%%%%%%%%%%%%%%%%%%%- end -%%%%%%%%%%%%%%%%%%%%%%%%%%%%%%%

%%%%%%%%%%%%%%%%%%%%%%%%%%%%%%%%%-Proof-%%%%%%%%%%%%%%%%%%%%%%%%%%%%%%%%%
\begin{proof}
From \cite[Prop. 3.3]{ChNa:2025} we have $M_a>0$ and $\omega_a>0$ independent of $n$ and $t$ such that $\|e^{A_n t}\|_{2d} \leq M_ae^{\omega_a t}$ for all $t\geq 0$ and each $n\gg 1$. Since $P_n = A_n + \Phi_n$ and the perturbation $\Phi_n$ bounded uniformly in $n$ (see \eqref{eq:PhiBD}), it follows from \cite[Ch. 3, Thm. 1.1]{Pazy:1983} and the growth bound for $e^{A_nt}$ that \vspace{-3mm}
\begin{equation}\label{eq:pgrowth}
 \|e^{P_n t}\|_{2d} \leq M e^{\omega t} \FORALL t\geq 0 \vspace{-1.1mm}
\end{equation}
with $M=M_a$ and $\omega = \omega_a+ M_a\|\Phi_n\|_{2d}$. Using this we can we can now bound the terms containing $e^{P_n t}$ and $e^{P_n(t-\tau)}$ in the solution $v(t) = e^{P_n t}v_0 + \int_0^t e^{P_n(t-\tau)} \eta(\tau)\dd \tau$ to get \eqref{eq:lmbnd}.

Fix $n \gg 1$ and $z\in \rline^n$. Let $v(t) = tP_ne^{P_n t}z$ for $t\in (0,T]$. Differentiating $v$ and using $P_n = A_n + \Phi_n$ gives us the ODE: $\dot v(t)=A_n v(t)+ \Phi_n v(t) + \frac{v(t)}{t}$ for $t\in (0,T]$ and $v(0)=0$. Expressing the solution of this ODE using the variation of constants formula gives us
\vspace{-5mm}
\begin{equation}\label{eq:sol1}
 v(t) = \int_0^t e^{A_n(t-\tau)}\Phi_nv(\tau)\, \dd \tau + \int_0^t e^{A_n(t-\tau)}\frac{v(\tau)}{\tau}\, \dd \tau \vspace{-1mm}
\end{equation}
for all $t\in (0,T]$. We can rewrite this equation \vspace{-1mm} as
\begin{align}
 v(t) =& \int_0^t e^{A_n(t-\tau)}\Phi_nv(\tau)\, \dd \tau + \int_{t/2}^t e^{A_n(t-\tau)}\frac{v(\tau)}{\tau}\, \dd \tau \nonumber\\[-1.2ex]
 & + \int_0^{t/2}e^{A_n(t-\tau)} P_n e^{P_n \tau}z\, \dd \tau \!\!\!\!\!\! \FORALL t\in [0,T].\label{eq:randm}\\[-4.6ex] \nonumber
\end{align}
We have replaced $v(\tau)/\tau$ with $P_n e^{P_n \tau}$ on the right side of \eqref{eq:sol1} to get the last integral. Using \eqref{eq:PhiBD} and \eqref{eq:AnenT} to bound the terms in the first two integral on the right side of \eqref{eq:randm} we \vspace{-1mm} get
\begin{align}
 &\left\| \int_0^t e^{A_n(t-\tau)}\Phi_nv(\tau)\, \dd \tau \right\|_{2d} \leq c_1\int_0^t\|v(\tau)\|_{2d}\, \dd \tau,\label{eq:estimate1} \\[0.5ex]
 &\left\| \int_{t/2}^t e^{A_n(t-\tau)}\frac{v(\tau)}{\tau}\, \dd \tau \right\|_{2d} \leq \frac{c_2}{t}\int_0^t \|v(s)\|_{2d}\, \dd s \label{eq:estimate2}\\[-4.5ex]\nonumber
\end{align}
%
%\begin{equation}\label{eq:estimate1}
% \left\| \int_0^t e^{A_n(t-\tau)}\Phi_nv(\tau)\, \dd \tau \right\|_{2d} \leq c_1\int_0^t\|v(\tau)\|_{2d}\, \dd \tau \vspace{-1mm}
%\end{equation}
for all $t\in (0,T]$ and some $c_1,c_2>0$ independent of $n$, $z$ and $t$.
% The second integral in \eqref{eq:randm} can also be bounded similarly, \vspace{-1mm} i.e.
%\begin{equation}\label{eq:estimate2}
% \left\| \int_{t/2}^t e^{A_n(t-\tau)}\frac{v(\tau)}{\tau}\, \dd \tau \right\|_{2d} \leq \frac{c_2}{t}\int_0^t \|v(s)\|_{2d}\, \dd s \vspace{-1mm}
%\end{equation}
%for all $t\in (0,T]$ and some $c_2$ independent of $n$ and $z$. Above we have also used $2\tau>t$.
In \eqref{eq:estimate2} we have also used $2\tau>t$. From \eqref{eq:PhiBD} and \eqref{eq:AnenT} it follows that $\|e^{A_n t}P_n\|_{2d} \leq C_a/t$ for all $t\in (0,T]$ and some $C_a$ independent of $n$.  Using this and \eqref{eq:pgrowth} we can bound the third integral term on the right side of \eqref{eq:randm} as \vspace{-1mm} follows
\begin{equation}
  \left\|\int_0^{t/2}\!\!\!\!\!\!e^{A_n(t-\tau)} P_n e^{P_n \tau}z\, \dd \tau \right\|_{2d} \!\!\!\leq \!\!\int_0^{t/2} \!\frac{c_3\|z\|_{2d}}{(t-\tau)}\, \dd \tau \leq c_3\|z\|_{2d} \vspace{-1mm} \label{eq:estimate3}
\end{equation}
for all $t\in (0,T]$ and some $c_3$ independent of $n$, $z$ and $t$. Now using \eqref{eq:estimate1}-\eqref{eq:estimate3} to bound the terms in \eqref{eq:randm} we get
$\|v(t)\|_{2d} \leq c_3\|z\|_{2d} + \frac{c_1 t+c_2}{t}\int_0^t  \|v(\tau)\|_{2d}\, \dd \tau$ for all $t\in (0,T]$. Applying the Gronwall's lemma then gives us $\|v(t)\|_{2d} \leq N_T\|z\|_{2d}$ for all $t\in (0,T]$ with $N_T = c_3e^{c_1 T+ c_2}$. Since $v(t) = tP_ne^{P_n t}z$, we therefore have $\|P_n e^{P_n t}\|_{2d} \leq N_T/t$ for all $t\in (0,T]$ and any $n\gg 1$. From this and the facts that $P_n^k e^{P_n t}\! =\! (P_n e^{P_n \frac{t}{k}})^k$ and $k^k\leq k!e^k$ we get \eqref{eq:analytic} with $M_T=e N_T$.
\end{proof}
%%%%%%%%%%%%%%%%%%%%%%%%%%%-- End Proof --%%%%%%%%%%%%%%%%%%%%%%%%%%%%%

Finally, we prove a discrete Sobolev type inequality.
%%%%%%%%%%**********%%%%%%%%%%**********%%%%%%%%%%**********%%%%%%%%%%%
%---------------------------- Begin lemma ----------------------------%
\begin{lemma}\label{lm:disc_sob}
For any $v\in \rline^n$ there exists an $M>0$ independent of $n$ and $v$ such that \vspace{-1mm}
\begin{equation}\label{eq:sobemb}
 \|v\|_{\infty} + \|D_n v\|_{\infty} \leq M\left(\|v\|_{2d} + \|P_n v\|_{2d}\right)\vspace{1mm}
\end{equation}
\end{lemma}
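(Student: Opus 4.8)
The plan is to reduce \eqref{eq:sobemb} to the analogous discrete Sobolev inequality for $A_n$ that is already available in \cite{ChNa:2025}, using only that $P_n$ and $A_n$ differ by the perturbation $\Phi_n$, which is bounded uniformly in $n$ by \eqref{eq:PhiBD}. Concretely, \cite{ChNa:2025} furnishes a constant $M_a>0$, independent of $n$ and $v$, with
\begin{equation*}
 \|v\|_\infty + \|D_n v\|_\infty \leq M_a\big(\|v\|_{2d} + \|A_n v\|_{2d}\big) \FORALL v\in\rline^n .
\end{equation*}

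The first step is to bound $\|A_n v\|_{2d}$ in terms of $\|P_n v\|_{2d}$ and $\|v\|_{2d}$. Since $A_n = P_n - \Phi_n$, the triangle inequality and \eqref{eq:PhiBD} give
\begin{equation*}
 \|A_n v\|_{2d} \leq \|P_n v\|_{2d} + \|\Phi_n\|_{2d}\,\|v\|_{2d} \leq \|P_n v\|_{2d} + c_\varphi\,\|v\|_{2d},
\end{equation*}
where $c_\varphi = \sup_{x,\tilde x\in[0,1]}|\varphi(x,\tilde x)|$ is independent of $n$. The second step is to substitute this into the $A_n$-inequality, which yields
\begin{equation*}
 \|v\|_\infty + \|D_n v\|_\infty \leq M_a\big((1+c_\varphi)\|v\|_{2d} + \|P_n v\|_{2d}\big),
\end{equation*}
and this is exactly \eqref{eq:sobemb} with $M = M_a\,(1+c_\varphi)$.

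I do not expect any genuine obstacle here: the only points requiring care are that the constant $M_a$ in the $A_n$-inequality is truly independent of $n$ and that the perturbation bound \eqref{eq:PhiBD} is uniform in $n$, both of which are in hand. For completeness I would note where the real work sits: a self-contained proof of the $A_n$-version (done in \cite{ChNa:2025}) amounts to inverting the tridiagonal-plus-lower-order operator $A_n$, i.e. solving the discrete two-point boundary value problem $A_n v = g$, estimating the associated discrete Green's function uniformly in $n$, and then combining the resulting pointwise representation with the Cauchy--Schwarz inequality to pass from the $\|\cdot\|_{2d}$ norm on the right to the $\|\cdot\|_\infty$ norm on the left, with the bound on $D_n v$ following from the discrete analogue of $v_x(x)=v_x(0)+\int_0^x v_{xx}$. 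Since that laborious step is already established in \cite{ChNa:2025}, the proof here is just the short perturbation argument above.
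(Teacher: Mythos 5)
Your proposal is correct and follows essentially the same route as the paper: the paper also invokes the uniform-in-$n$ inequality $\|v\|_\infty + \|D_n v\|_\infty \leq M_1(\|v\|_{2d} + \|A_n v\|_{2d})$ from \cite[Eq.\ (58)]{ChNa:2025}, then uses $A_n = P_n - \Phi_n$ together with the uniform bound \eqref{eq:PhiBD} to control $\|A_n v\|_{2d}$ by $\|P_n v\|_{2d} + C\|v\|_{2d}$ and substitutes. Your explicit constant $M = M_a(1+c_\varphi)$ and your remarks on where the real work lies are consistent with the paper's treatment.
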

%%%%%%%%%**********%%%%%%%%%%**********%%%%%%%%%%**********%%%%%%%%%%%
%------------------------------ End lemma ----------------------------%

%%%%%%%%%%%%%%%%%%%%%%%%%%%%%%%%%-Proof-%%%%%%%%%%%%%%%%%%%%%%%%%%%%%%%%%
\begin{proof}
From \cite[Eq. (58)]{ChNa:2025} we get the inequality \vspace{-1mm}
\begin{equation}\label{eq:Anest}
 \!\!\!\|v\|_{\infty} \!+\! \|D_n v\|_{\infty} \!\leq\! M_1 \left(\|v\|_{2d} + \|A_n v\|_{2d}\right) \quad \forall \, v\in \rline^n, \vspace{-1mm}
\end{equation}
where $M_1>0$ is independent of $n$ and $v$. Since $A_n = P_n - \Phi_n$, using \eqref{eq:PhiBD} we get $\|A_n v\|_{2d} \leq C(\|P_nv\|_{2d} + \|v\|_{2d})$ for some $C$ independent of $n$. Using this in \eqref{eq:Anest} gives us \eqref{eq:sobemb}.
\end{proof}
%%%%%%%%%%%%%%%%%%%%%%%%%%%-- End Proof --%%%%%%%%%%%%%%%%%%%%%%%%%%%%%
\vspace{-0.5mm}

%%%%%%%%%%**********%%%%%%%%%%**********%%%%%%%%%%**********%%%%%%%%%%
%%%%%%%%%%**********%%%%%%%%%%**********%%%%%%%%%%**********%%%%%%%%%%
\section{Convergence results} \label{sec4} %Section 4

We now present two results which establish and characterize the convergence of the solution of the $n^{\rm th}$-order semi-discrete approximation \eqref{eq:semi_disc} to the solution of the PIDE \eqref{eq:heat1}-\eqref{eq:heat2} as $n\to\infty$. In the first result given below, we consider the initial states of \eqref{eq:heat1}-\eqref{eq:heat2} to be continuous functions. Recall the operator $S_n:\rline^n\to L^\infty(0,1)$ from the notations at the end of Section \ref{sec1}. In the proof we will often use the fact that $\|S_n v\|_{L^2(0,1)}=\|v\|_{2d}$ for each $v\in\rline^n$. \vspace{1mm}

%%%%%%%%%%**********%%%%%%%%%%**********%%%%%%%%%%**********%%%%%%%%%%%
%--------------------------- Begin theorem ---------------------------%
\begin{theorem}\label{th:semidisc_conv}
Fix $T\!>0$. Consider $u_0\in  C[0,1]$, a function $f\in C^3[0,T]$ and a sequence $\{f_n\}_{n=1}^\infty$ in $C^3[0,T]$ such that \vspace{-1mm}
\begin{equation} \label{eq:fn_limit}
 \lim_{n\to\infty}\|f_n-f\|_{C^1[0,T]}=0. \vspace{-1.5mm}
\end{equation}
Let $u$ be the solution of \eqref{eq:heat1}-\eqref{eq:heat2} on the time interval $[0,T]$ with

\m\vspace{-5mm}\noindent  initial state $u_0$ and input $f$. Let $v_n$ be the solution of the $n^{\rm th}$-order semi-discrete system \eqref{eq:semi_disc} on the time interval $[0,T]$ with initial state $R_n u_0$ and input $f_n$. Then we have \vspace{-1mm}
\begin{equation}\label{eq:semdisconv2}
  \lim_{n\to\infty}\sup_{t\in[0,T]}\|u(\cdot,t) - S_n v_n(t)\|_{L^2(0,1)}= 0. \vspace{1mm}
\end{equation}
\end{theorem}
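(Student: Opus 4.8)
The plan is to reduce the convergence assertion to an application of the three structural results about $P_n$ established in Section~\ref{sec3} (Lemmas~\ref{lm:approxAq}, \ref{lm:disc_sob} and Proposition~\ref{pr:eigenA}), mimicking the argument used for the pure PDE case in \cite{ChNa:2025}. First I would split the error $u(\cdot,t)-S_nv_n(t)$ according to a time threshold: fix a small $\delta\in(0,T)$ and handle $t\in[0,\delta]$ and $t\in[\delta,T]$ separately. On $[0,\delta]$, since $u_0\in C[0,1]$ we have $\|R_nu_0\|_{2d}\to\|u_0\|_{L^2(0,1)}$ and $S_nR_nu_0\to u_0$ in $L^2$; using the uniform growth bound \eqref{eq:lmbnd} from Proposition~\ref{pr:eigenA} for $v_n$ and the strong continuity of $\sline$ for $u$, together with the boundedness of the inputs $f_n$ (which follows from \eqref{eq:fn_limit}), both $S_nv_n(t)$ and $u(\cdot,t)$ stay within $O(\delta)$-plus-$o(1)$ of $u_0$ uniformly on $[0,\delta]$, so this part contributes an error that is small once $\delta$ is small and $n$ is large.

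The main work is on $[\delta,T]$. Here I would use Lemma~\ref{lm:reg} to get $u\in C([\delta,T];PC^{(2),1}[0,1])$ with $u(\cdot,t)$ satisfying the boundary condition at $x=0$ and $f(t)=\alpha_1u_x(1,t)+\beta_1u(1,t)$, so that $\xi=u$ meets the hypotheses of Lemma~\ref{lm:approxAq} (one must check that the non-smoothness points of $u_{xx}(\cdot,t)$ lie in $\Iscr$, which is where the discontinuities of the coefficients sit — this is inherited from the regularity proof). Define the discretized solution error $\varepsilon_n(t)=R_nu(\cdot,t)-v_n(t)$. Differentiating and using \eqref{eq:heat1}-\eqref{eq:heat2} and \eqref{eq:semi_disc}, one finds $\dot\varepsilon_n(t)=P_n\varepsilon_n(t)+g_n(t)$ where $g_n(t)=\big(R_n\Pscr u(\cdot,t)-P_nR_nu(\cdot,t)-B_nf(t)\big)+B_n(f(t)-f_n(t))+\big(R_n\dot u(\cdot,t)-\dot{(R_nu)}(\cdot,t)\big)$; the last bracket vanishes since $R_n$ commutes with $d/dt$. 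The first bracket is bounded in $\|\cdot\|_{2d}$ by $C\sqrt h\sup_t\|u(\cdot,t)\|_{PC^{(2),1}[0,1]}\to 0$ by Lemma~\ref{lm:approxAq}, and $\|B_n(f-f_n)(t)\|_{2d}$ I would estimate using $\|B_n\|_{2d}=O(h^{-3/2})$ together with $\|f-f_n\|_{C^1[0,T]}\to0$ — here a little care is needed, and one may instead absorb the $B_nf(t)$ term together with $R_n\Pscr u-P_nR_nu$ as in Lemma~\ref{lm:approxAq} and only the $B_nf_n$ discrepancy remains, which is handled by writing $f-f_n$ and exploiting that in \eqref{eq:semdisconv2} we are comparing against $u$ driven by $f$, not $f_n$; the cleanest route is to let $\tilde v_n$ solve \eqref{eq:semi_disc} with input $f$ and show $\|v_n-\tilde v_n\|_{2d}\to0$ separately via \eqref{eq:lmbnd} and the $B_n$ bound against $\|f-f_n\|_{C^1}$, noting $h^{-3/2}$ is killed once we also use that $f-f_n$ is small in a way compatible with the scheme — this is the step I expect to be the main obstacle and where I would follow \cite{ChNa:2025} most closely. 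Given the bound on $g_n$, the variation-of-constants formula $\varepsilon_n(t)=e^{P_n(t-\delta)}\varepsilon_n(\delta)+\int_\delta^t e^{P_n(t-s)}g_n(s)\,ds$ together with the uniform growth bound \eqref{eq:pgrowth} yields $\sup_{t\in[\delta,T]}\|\varepsilon_n(t)\|_{2d}\le M e^{\omega T}\|\varepsilon_n(\delta)\|_{2d}+C'\sqrt h+o(1)$.

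It remains to control $\|\varepsilon_n(\delta)\|_{2d}=\|R_nu(\cdot,\delta)-v_n(\delta)\|_{2d}$. Since $u(\cdot,\delta)\in C[0,1]$ we have $S_nR_nu(\cdot,\delta)\to u(\cdot,\delta)$ in $L^2$, i.e. $\|R_nu(\cdot,\delta)\|_{2d}$ is bounded, but this alone does not give smallness; instead I would observe that $v_n(\delta)=e^{P_n\delta}R_nu_0+\int_0^\delta e^{P_n(\delta-s)}B_nf_n(s)\,ds$ and compare it termwise with a discretization of the mild solution formula for $u(\cdot,\delta)$, using the $[0,\delta]$ estimate already obtained to conclude $\|S_nv_n(\delta)-u(\cdot,\delta)\|_{L^2}\le O(\delta)+o(1)$, hence $\|\varepsilon_n(\delta)\|_{2d}\le O(\delta)+o(1)$. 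Finally, to pass from $\|\varepsilon_n(t)\|_{2d}$ back to $\|u(\cdot,t)-S_nv_n(t)\|_{L^2}$ I would write $u(\cdot,t)-S_nv_n(t)=\big(u(\cdot,t)-S_nR_nu(\cdot,t)\big)+S_n\varepsilon_n(t)$, use $\|S_n\varepsilon_n(t)\|_{L^2}=\|\varepsilon_n(t)\|_{2d}$, and bound $\|u(\cdot,t)-S_nR_nu(\cdot,t)\|_{L^2}$ uniformly on $[\delta,T]$ by a modulus-of-continuity argument: the equicontinuity of $\{u(\cdot,t):t\in[\delta,T]\}$ in $C[0,1]$ (from $u\in C([\delta,T];C^1[0,1])$ via Lemma~\ref{lm:reg}) makes $\sup_{t\in[\delta,T]}\|u(\cdot,t)-S_nR_nu(\cdot,t)\|_{L^\infty}\to0$. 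Combining: for every $\varepsilon>0$ pick $\delta$ so the $O(\delta)$ terms are below $\varepsilon$, then let $n\to\infty$ so the $o(1)$ and $\sqrt h$ terms vanish; since $\varepsilon$ is arbitrary, \eqref{eq:semdisconv2} follows. The one genuinely delicate point, as noted, is reconciling the $h^{-3/2}$ blow-up of $\|B_n\|_{2d}$ with the rate in \eqref{eq:fn_limit}; I would address it by not demanding $f_n\to f$ fast, but by routing the input discrepancy through the comparison solution $\tilde v_n$ and invoking the corresponding lemma structure from \cite{ChNa:2025}, or alternatively strengthening the role of \eqref{eq:fn_limit} is unnecessary because $B_nf_n$ enters only the last component and $\|B_n(f-f_n)\|_{2d}=h^{-3/2}|b_n|\,|f(t)-f_n(t)|\cdot\sqrt h = h^{-1}|b_n|\,|f-f_n|$ with $b_n=O(h)$, giving $O(\|f-f_n\|_{C[0,T]})\to0$ after all.
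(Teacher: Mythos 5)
Your overall architecture (consistency via Lemma \ref{lm:approxAq}, uniform stability via Proposition \ref{pr:eigenA}, a two-regime split in time, and the bookkeeping $u-S_nv_n=(u-S_nR_nu)+S_n(R_nu-v_n)$) matches the paper's strategy, but the step you yourself flag as the main obstacle — the unbounded input matrix $B_n$ — is not resolved correctly, and this is a genuine gap rather than a technicality. Your closing computation is wrong: the nonzero entry of $B_n$ is $b_n/h^2$, so $\|B_n\|_{2d}=\sqrt{h}\,|b_n|/h^2=|b_n|\,h^{-3/2}$, not $|b_n|\,h^{-1}$; moreover $b_n=O(h)$ only if $\alpha_1\neq0$, while for $\alpha_1=0$ (allowed by the theorem, and used in the paper's numerical examples) one has $b_n=\theta(nh)/\beta_1=O(1)$. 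Hence $\|B_n\|_{2d}$ is $O(h^{-1/2})$ or $O(h^{-3/2})$ and diverges in every case, so $\|B_n(f-f_n)\|_{2d}$ cannot be made small from \eqref{eq:fn_limit}, which carries no rate. Your alternative route through a comparison solution $\tilde v_n$ with input $f$ runs into exactly the same divergence, since \eqref{eq:lmbnd} applied to $\dot{(v_n-\tilde v_n)}=P_n(v_n-\tilde v_n)+B_n(f_n-f)$ still requires integrating $\|B_n(f_n-f)\|_{2d}$. The same defect infects your $[0,\delta]$ argument: applying \eqref{eq:lmbnd} directly to $v_n$ with forcing $B_nf_n$ gives an integrand of size $\|B_n\|_{2d}\|f_n\|_\infty\to\infty$, so it does not show that $v_n(t)$ stays uniformly (in $n$) close to $R_nu_0$ for small $t$; uniform-in-$n$ continuity at $t=0$ with merely continuous $u_0$ is itself nontrivial.

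The paper's remedy, which your proof needs, is a boundary lifting by the function $\nu$ defined below \eqref{eq:mildsoln}: the error is taken as $e_n(t)=R_nu(\cdot,t)-v_n(t)+R_n\nu\,(f_n(t)-f(t))$, so that the input mismatch enters the error ODE only through $\eta_n^1=[P_nR_n\nu+B_n](f-f_n)$ and $\eta_n^3=R_n\nu(\dot f_n-\dot f)$. Applying Lemma \ref{lm:approxAq} to $\xi=\nu$ (which satisfies the left boundary condition and has $f_\nu\equiv1$) gives $\|P_nR_n\nu+B_n\|_{2d}\leq\|R_n\Pscr\nu\|_{2d}+C\sqrt h$, i.e.\ the combination is uniformly bounded even though each summand is not; this cancellation, together with the $C^1$ (not just $C^0$) convergence in \eqref{eq:fn_limit} needed for $\eta_n^3$, is what makes the input discrepancy harmless. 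The same lifting, combined with approximating $u_0-\nu f(0)$ by a smooth compactly supported $\tilde u_0$ and setting $z_n=v_n-R_n\nu f_n-R_n\tilde u_0$ so that the forcing $(B_n+P_nR_n\nu)f_n+P_nR_n\tilde u_0-R_n\nu\dot f_n$ is uniformly bounded, is how the paper proves the uniform-in-$n$ estimate near $t=0$ (its \eqref{eq:initcont}), which your $[0,\delta]$ and $\|\varepsilon_n(\delta)\|_{2d}$ estimates implicitly rely on. Without this device your argument fails at both places; with it, the rest of your outline (consistency for $\xi=u$ on $[\rho,T]$, uniform growth bound, Riemann-sum convergence of $S_nR_nu\to u$ by equicontinuity) goes through as in the paper.
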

%%%%%%%%%%**********%%%%%%%%%%**********%%%%%%%%%%**********%%%%%%%%%%%
%----------------------------- End theorem ---------------------------%
%%%%%%%%%%%%%%%%%%%%%%%%%%%%%%%%%-Proof-%%%%%%%%%%%%%%%%%%%%%%%%%%%%%%%%%
\begin{proof}
From Lemma \ref{lm:reg} we get $u\!\in\! C((0,T];PC^{(2),1}[0,1])$ \! \!\!and $\dot u\in C((0,T];C^1[0,1])$. Using this in \eqref{eq:heat1}
it is easy to see that the points at which $u_{xx}(\cdot,t)$ is not differentiable is a subset of $\Iscr$ for all $t\in(0,T]$. First we will show that \vspace{-2.2mm}
\begin{equation} \label{eq:initcont}
 \lim_{t\to 0} \sup_{n\in\nline} \|v_n(t)-R_n u_0\|_{2d}=0. \vspace{-1.6mm}
\end{equation}
Let $e_n(t) = R_n u(\cdot,t) - v_n(t) + R_n \nu (f_n(t)-f(t))$. Then \vspace{-1.6mm}
\begin{align}
 &u(\cdot,t)-S_n v_n(t) = [u(\cdot,t)-S_n R_n u(\cdot,t)]\nonumber\\[-0.5ex]
 &\hspace{20mm} + S_n e_n(t) - [S_nR_n\nu(f_n(t)-f(t))]. \label{eq:useexp}\\[-4.5ex]\nonumber
\end{align}
For any $\rho\in(0,T)$ we will show that \vspace{-1.3mm}
\begin{equation} \label{eq:enl2}
 \lim_{n\to\infty}\sup_{t\in[\rho,T]}\|e_n(t)-e^{P_n (t-\rho)} e_n(\rho) \|_{2d} =0. \vspace{-1.3mm}
\end{equation}
The equations \eqref{eq:initcont} and \eqref{eq:enl2} imply \eqref{eq:semdisconv2}. Indeed, fix $\epsilon>0$. Using $u\in C([0,T]; L^2(0,1))$ and \eqref{eq:initcont} choose $\rho_\epsilon$ such that $\|u(\cdot,t)-u_0\|_{L^2(0,1)}\leq \epsilon$ and $\|S_n v_n(t)-S_nR_n u_0\|_{L^2(0,1)}<\epsilon$ for all $t\in[0,\rho_\epsilon]$ and $n\in\nline$. Since $u_0\in C[0,1]$, there exists $\tilde n_\epsilon$ such that $\|u_0 - S_n R_n u_0\|_{L^2(0,1)}<\epsilon$ for $n\geq\tilde n_\epsilon$. From these inequalities we get \vspace{-1.5mm}
\begin{equation} \label{eq:pfrho}
 \sup_{t\in[0, \rho_\epsilon]} \|u(\cdot,t)-S_n v_n(t)\|_{L^2(0,1)} <3\epsilon \vspace{-1.5mm}
\end{equation}
for all $n\geq\tilde n_\epsilon$. Next choose $n_\epsilon\!>\tilde n_\epsilon$ such that for each \vspace{-1mm} $n>n_\epsilon$,
\begin{align}
 &\sup_{t\in[0,T]}\|S_nR_n\nu(f_n(t)-f(t))\|_{L^2} \nonumber\\[-2ex]
 & \hspace{10mm} + \sup_{t\in[\rho_\epsilon, T]} \|u(\cdot,t) - S_n R_n u(\cdot,t)\|_{L^2}<2\epsilon, \label{eq:snrn}\\[-0.5ex]
 &\sup_{t\in[\rho_\epsilon,T]} \|e_n(t)-e^{P_n (t-\rho_\epsilon)}e_n(\rho_\epsilon)\|_{2d} <\epsilon. \label{eq:ent-} \\[-5ex]\nonumber
\end{align}
The existence of such an $n_\epsilon$ is guaranteed by  \eqref{eq:fn_limit}, the uniformly continuity of $u$ on $[0,1]\times[\rho_\epsilon,T]$ (note that $u\in C([\rho_\epsilon,T];C[0,1])$) and \eqref{eq:enl2}. Then for $n\geq n_\epsilon$, using \eqref{eq:pfrho} and \eqref{eq:snrn} in \eqref{eq:useexp} with $t=\rho_\epsilon$ we get $\|e_n(\rho_\epsilon)\|_{2d} <5\epsilon$. From this, via Proposition \ref{pr:eigenA}, we get $\|e^{P_n (t-\rho_\epsilon)} e_n(\rho_\epsilon) \|_{2d}\leq 5\epsilon M e^{\omega T}$ for $t\in[\rho_\epsilon,T]$. This estimate and \eqref{eq:ent-} give $\|S_n e_n(t)\|_{L^2(0,1)} <\epsilon+5\epsilon M e^{\omega T}$ for $t\in[\rho_\epsilon,T]$. Using this and \eqref{eq:snrn} in \eqref{eq:useexp} we get $\sup_{t\in[\rho_\epsilon,T]} \|u(\cdot,t)-S_n v_n(t)\|_{L^2(0,1)} <3\epsilon + 5\epsilon Me^{\omega T}$ for all $n\geq n_\epsilon$. Since \eqref{eq:pfrho} and the above estimate can be established for any $\epsilon>0$, it follows that \eqref{eq:semdisconv2} holds. We will now complete the proof of the theorem by showing that \eqref{eq:initcont} and \eqref{eq:enl2} hold.

We will first prove \eqref{eq:initcont}. Recall $\nu$ defined below \eqref{eq:mildsoln}. Fix $\epsilon>0$ and $\tilde u_0\in C^3[0,1]$ such that $\tilde u_0$ has a compact support in $(0,1)$ and $\|u_0 - \nu f(0) - \tilde u_0\|_{L^2(0,1)}< \epsilon$. Define $z_n(t)=v_n(t)-R_n\nu f_n(t)-R_n \tilde u_0$. Then  \vspace{-1.5mm}
\begin{equation} \label{eq:znode}
 \dot z_n(t) = P_n z_n(t) +  \eta_n(t), \quad z_n(0)=R_n (u_0 - \nu f_n(0) - \tilde u_0),  \vspace{0mm}
\end{equation}
where $\eta_n(t)= (B_n+P_n R_n \nu) f_n(t) + P_n R_n \tilde u_0 - R_n\nu \dot f_n(t)$. Since $u_0 - \nu f(0) - \tilde u_0 \in C[0,1]$ and \eqref{eq:fn_limit} holds, there exists $\tilde t_\epsilon>0$ and $n_\epsilon$ such that for $t\leq \tilde t_\epsilon$ and $n\geq n_\epsilon$, \vspace{-1.1mm}
\begin{equation} \label{eq:trbea}
\|R_n\nu(f_n(t)-f_n(0))\|_{2d}\leq \epsilon, \quad \|z_n(0)\|_{2d}<2\epsilon. \vspace{-1mm}
\end{equation}
Applying Lemma \ref{lm:approxAq} with $\xi(\cdot,t)=\tilde u_0$ and $\xi(\cdot,t)=\nu$ we get, for some $C_1>0$ and all $n\gg1$, that $\|P_n R_n \tilde u_0\|_{2d} \leq \|R_n\Pscr\tilde u_0\|_{2d} + C_1\sqrt{h}$ \vspace{-1mm} and
\begin{equation}  \label{eq:nulemma}
 \|P_n R_n \nu + B_n\|_{2d} \leq \|R_n\Pscr\nu\|_{2d} + C_1\sqrt{h}. \vspace{-1mm}
\end{equation}
Using this, the fact that $\|R_n\nu\|_{2d}$,\! $\|R_n\Pscr\nu\|_{2d}$ and $\|R_n\Pscr\tilde u_0\|_{2d}$ can each be bounded by $C_2>0$, and \eqref{eq:fn_limit}, it follows that $\|\eta_n(t)\|_{2d}< C_3 \sqrt{h}$ for some $C_3>0$, each $t\in[0,T]$ and all $n$. So applying Proposition \ref{pr:eigenA} to \eqref{eq:znode} we can get \vspace{-1mm}
$$ \|z_n(t)\|_{2d} \leq M e^{\omega t} \|z_n(0)\|_{2d} + C t \FORALL t\in[0,T]  \vspace{-1mm} $$
and some $C>0$ that depends on $\tilde u_0$ and so on $\epsilon$. The above inequality and \eqref{eq:trbea} guarantee the existence of $t_\epsilon<\tilde t_\epsilon$ such that $\|z_n(t)\|_{2d}<4M\epsilon$ for $n\geq n_\epsilon$ and $t\in[0,t_\epsilon]$. From this, the expression $v_n(t)-R_n u_0 = z_n(t) - z_n(0) + R_n\nu(f_n(t)-f_n(0))$ and \eqref{eq:trbea}, it now follows easily that
$\|v_n(t)-R_n u_0\|_{2d}<7M\epsilon$ for $n\geq n_\epsilon$ and $t\in[0,t_\epsilon]$. Finally, using $\lim_{t\to0} \|v_n(t)-R_n u_0\|_{2d}=0$ for each $n$ and redefining $t_\epsilon$ if needed, we get $\|v_n(t)-R_n u_0\|_{2d}<7M\epsilon$ for all $n\in\nline$ and $t\in[0,t_\epsilon]$. Since this estimate can be established for any $\epsilon>0$, we conclude that \eqref{eq:initcont} holds.

We will now complete this proof by establishing \eqref{eq:enl2}. Recall $e_n$ defined below \eqref{eq:initcont}. From \eqref{eq:heat1} and \eqref{eq:semi_disc} we get \vspace{-1mm}
\begin{equation} \label{eq:einit}
  \dot{e}_n(t) = P_n e_n(t) + \eta^1_n(t)+\eta^2_n(t)+\eta^3_n(t), \vspace{-1mm}
\end{equation}
where $\eta_n^1(t)= [P_n R_n \nu + B_n](f(t)-f_n(t))$, $\eta_2(t)=R_n \Pscr u(\cdot,t) - P_n R_n u(\cdot,t)-B_n f(t)$ and $\eta_n^3(t)= R_n \nu (\dot{f}_n(t) - \dot{f}(t))$.
From \eqref{eq:nulemma}, the estimate for $\|R_n\Pscr\nu\|_{2d}$ given below it and \eqref{eq:fn_limit} we get that $\lim_{n\to\infty} \sup_{t\in[0,T]} \|\eta_n^1(t)\|_{2d} =0$. Since $u\in C([\rho,T];PC^{(2),1} [0,1])$ for any given $\rho\in(0,T)$, see Lemma \ref{lm:reg}, applying Lemma \ref{lm:approxAq} with $\xi=u$ we get $\sup_{t\in[\rho,T]}\|\eta_n^2(t)\|_{2d} \leq  \sqrt{h}\tilde C(\rho) $ for all $n\gg1$ and a constant $\tilde C(\rho)$ independent of $n$. So $\lim_{n\to\infty}\sup_{t\in[\rho,T]} \|\eta_n^2(t)\|_{2d}=0$. Finally, it follows from \eqref{eq:fn_limit} and the estimate for $\|R_n\nu\|_{2d}$ below \eqref{eq:nulemma} that $\lim_{n\to\infty}\sup_{t\in[0,T]} \|\eta_n^3(t)\|_{2d}=0$. In summary we get that for any $\rho>0$, \vspace{-1mm}
\begin{equation} \label{eq:eta_conv}
 \lim_{n\to\infty}  \sup_{t\in [\rho,T]}\|\eta_n^1(t)+ \eta_n^2(t)+\eta_n^3(t)\|_{2d} =0. \vspace{-1mm}
\end{equation}
Applying Proposition \ref{pr:eigenA} to \eqref{eq:einit} and using \eqref{eq:eta_conv} we get \eqref{eq:enl2}.
\end{proof}
%%%%%%%%%%%%%%%%%%%%%%%%%%%-- End Proof --%%%%%%%%%%%%%%%%%%%%%%%%%%%%%

We will now show that, under further assumptions on the initial state $u_0$ and inputs $f$ and $f_n$ the solutions of the $n^{\rm th}$-order semi-discrete approximation \eqref{eq:semi_disc} converge pointwise to the solution of the PIDE \eqref{eq:heat1}-\eqref{eq:heat2} as $n\to\infty$.

%%%%%%%%%%**********%%%%%%%%%%**********%%%%%%%%%%**********%%%%%%%%%%%
%--------------------------- Begin corollary--------------------------%
\begin{corollary}\label{cor:semidisc_conv}
Suppose that in Theorem \ref{th:semidisc_conv} we let $u_0\in\Dscr(\Pscr^\infty)$, $f, f_n\in C^\infty[0,T]$ with $\lim_{n\to\infty}\|f_n-f\|_{C^k[0,T]}=0$ and  $f_n^{(k)}(0)=0$ for all integers $k\geq0$ and $n\geq1$. Then \vspace{-1mm}
\begin{equation} \label{eq:semdisconv8b}
  \lim_{n\to\infty}\sup_{t\in [0,T]}\|R_n u(\cdot,t) - v_n(t)\|_\infty = 0.\vspace{-1mm}
\end{equation}
Furthermore, we also have \vspace{-2mm}
\begin{equation} \label{eq:semdisconv8}
  \lim_{n\to\infty}\sup_{t\in[0,T]}|u(0,t) - v_{n,1}(t)| = 0.
\end{equation}
\end{corollary}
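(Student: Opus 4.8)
The plan is to deduce \eqref{eq:semdisconv8b} from the discrete Sobolev inequality of Lemma~\ref{lm:disc_sob}, applied to the error vector $e_n(t)=R_nu(\cdot,t)-v_n(t)+R_n\nu(f_n(t)-f(t))$ from the proof of Theorem~\ref{th:semidisc_conv}, which by \eqref{eq:einit} satisfies $\dot e_n(t)=P_ne_n(t)+\eta_n(t)$ with $\eta_n=\eta_n^1+\eta_n^2+\eta_n^3$. The hypotheses of the corollary supply two simplifications. First, $f(0)=f_n(0)=0$, so $e_n(0)=R_nu_0-v_n(0)+R_n\nu(f_n(0)-f(0))=0$, which removes the delicate initial-layer analysis present in the proof of Theorem~\ref{th:semidisc_conv}; second, $u_0\in\Dscr(\Pscr^\infty)$ together with $f^{(k)}(0)=f_n^{(k)}(0)=0$ for all $k$ lets the relevant estimates be pushed uniformly down to $t=0$ (and makes $e_n$ smooth in $t$). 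Assuming for the moment that $\delta_n:=\sup_{t\in[0,T]}\big(\|\eta_n(t)\|_{2d}+\|\dot\eta_n(t)\|_{2d}\big)\to0$, the argument runs as follows. Since $e_n(0)=0$, \eqref{eq:lmbnd} gives $\sup_t\|e_n(t)\|_{2d}\le Me^{\omega T}T\,\delta_n$; differentiating $\dot e_n=P_ne_n+\eta_n$ shows that $\dot e_n$ solves $\tfrac{d}{dt}\dot e_n=P_n\dot e_n+\dot\eta_n$ with $\dot e_n(0)=\eta_n(0)$, so \eqref{eq:lmbnd} again gives $\sup_t\|\dot e_n(t)\|_{2d}\le Me^{\omega T}(1+T)\delta_n$; hence $\|P_ne_n(t)\|_{2d}=\|\dot e_n(t)-\eta_n(t)\|_{2d}$ and $\|e_n(t)\|_{2d}$ both tend to $0$ uniformly on $[0,T]$, and Lemma~\ref{lm:disc_sob} yields $\sup_t\|e_n(t)\|_\infty\to0$. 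Since $\|R_n\nu(f_n(t)-f(t))\|_\infty\le\|\nu\|_{L^\infty(0,1)}\|f_n-f\|_{C[0,T]}\to0$, the estimate \eqref{eq:semdisconv8b} follows.

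To justify $\delta_n\to0$ one needs a strengthening of Lemma~\ref{lm:reg}: under the present hypotheses $u$ and $\dot u$ both lie in $C([0,T];PC^{(2),1}[0,1])$ with $\sup_t\|u(\cdot,t)\|_{PC^{(2),1}[0,1]}$ and $\sup_t\|\dot u(\cdot,t)\|_{PC^{(2),1}[0,1]}$ finite, the points where $u_{xx}(\cdot,t)$ and $\dot u_{xx}(\cdot,t)$ fail to be differentiable lie in $\Iscr$ for all $t\in[0,T]$, and $u\in C([0,T];C^1[0,1])$. This is obtained by revisiting the decomposition $u=\sline_\cdot u_0+v+\nu f$ from the proof of Lemma~\ref{lm:reg}, now with $w_0=u_0$ (as $f(0)=0$): the semigroup term $\sline_tu_0$ lies in $\Dscr(\Pscr^m)$ for every $m$ and is continuous on $[0,T]$ in each $\Dscr(\Pscr^m)$-graph norm, and $\Dscr(\Pscr^2)\hookrightarrow PC^{(2),1}[0,1]$ continuously with the non-differentiability points of the second derivative contained in $\Iscr$ (by the same bootstrap on the integro-differential structure of $\Pscr$ used near the end of the proof of Lemma~\ref{lm:reg}); the convolution term $v(t)=\int_0^t\sline_{t-\tau}[\Pscr\nu f(\tau)-\nu\dot f(\tau)]\dd\tau$ has a forcing in $C^\infty([0,T];PC^1[0,1])$ all of whose $\tau$-derivatives vanish at $\tau=0$, so iterating \cite[Ch.~4, Thm.~3.5]{Pazy:1983} gives $v,\dot v\in C([0,T];\Dscr(\Pscr))$ with $\Pscr v,\Pscr\dot v\in C([0,T];PC^1[0,1])$, which the same bootstrap promotes to $v,\dot v\in C([0,T];PC^{(2),1}[0,1])$; and $\nu f$ is smooth. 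Writing $\dot u=\sline_\cdot\Pscr u_0+\dot v+\nu\dot f$ gives the claimed regularity of $\dot u$, and differentiating \eqref{eq:heat2} in $t$ gives $\alpha_0\dot u_x(0,t)+\beta_0\dot u(0,t)=0$ and $\alpha_1\dot u_x(1,t)+\beta_1\dot u(1,t)=\dot f(t)$.

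Given this regularity, the three parts of $\eta_n$ are controlled as follows. For $\eta_n^2(t)=R_n\Pscr u(\cdot,t)-P_nR_nu(\cdot,t)-B_nf(t)$, Lemma~\ref{lm:approxAq} with $\xi=u$ (permissible with $\rho=0$, since $u\in C([0,T];PC^{(2),1}[0,1])$ and $f_u=f$) gives $\sup_t\|\eta_n^2(t)\|_{2d}\le C\sqrt h\,\sup_t\|u(\cdot,t)\|_{PC^{(2),1}[0,1]}\to0$; differentiating in $t$ turns $\dot\eta_n^2(t)$ into $R_n\Pscr\dot u(\cdot,t)-P_nR_n\dot u(\cdot,t)-B_n\dot f(t)$, to which Lemma~\ref{lm:approxAq} applies with $\xi=\dot u$ and $f_{\dot u}=\dot f$, giving $\sup_t\|\dot\eta_n^2(t)\|_{2d}\to0$. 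The terms $\eta_n^1(t)=[P_nR_n\nu+B_n](f(t)-f_n(t))$ and $\eta_n^3(t)=R_n\nu(\dot f_n(t)-\dot f(t))$, together with their $t$-derivatives, are bounded by a constant independent of $n$ (via \eqref{eq:nulemma} and the bound $\|R_n\Pscr\nu\|_{2d}\le\|\Pscr\nu\|_{L^\infty(0,1)}$ for the first, and $\|R_n\nu\|_{2d}\le\|\nu\|_{L^\infty(0,1)}$ for the second) times $\|f_n-f\|_{C^k[0,T]}$ with $k\le2$, which tends to $0$; hence $\delta_n\to0$. Finally, \eqref{eq:semdisconv8} follows from \eqref{eq:semdisconv8b}: since $[R_nu(\cdot,t)]_1=u(h,t)$, we have $|u(0,t)-v_{n,1}(t)|\le|u(0,t)-u(h,t)|+\|R_nu(\cdot,t)-v_n(t)\|_\infty$, and $|u(0,t)-u(h,t)|\le h\sup_t\|u(\cdot,t)\|_{C^1[0,1]}\to0$.

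I expect the strengthened regularity result of the second paragraph to be the main obstacle — in particular, carrying the $PC^{(2),1}[0,1]$-regularity of $\dot u$ all the way to $t=0$, which is exactly where the compatibility conditions $u_0\in\Dscr(\Pscr^\infty)$ and $f^{(k)}(0)=0$ must be used in an essential way. Once that and the identity $e_n(0)=0$ are in hand, the remainder is a routine combination of Lemmas~\ref{lm:approxAq} and~\ref{lm:disc_sob} with the growth bound of Proposition~\ref{pr:eigenA}.
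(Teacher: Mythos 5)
Your proposal is correct and follows essentially the same route as the paper's proof: establish that $u$ and $\dot u$ both lie in $C([0,T];PC^{(2),1}[0,1])$ using the compatibility conditions $u_0\in\Dscr(\Pscr^\infty)$ and $f^{(k)}(0)=f_n^{(k)}(0)=0$ (the paper does this via \cite[Ch.~4, Thm.~3.5~(ii)]{Pazy:1983} and the identification $\dot u=\bar u$, the solution with data $\Pscr u_0$ and $\dot f$), then use $e_n(0)=0$, the uniform decay of $\eta_n$ and $\dot\eta_n$ via Lemma~\ref{lm:approxAq}, Proposition~\ref{pr:eigenA} to control $e_n$ and $\dot e_n$, the identity $P_ne_n=\dot e_n-\eta_n$, and Lemma~\ref{lm:disc_sob} to pass to the $\|\cdot\|_\infty$ norm, finishing \eqref{eq:semdisconv8} with the same triangle inequality. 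The only cosmetic difference is that you invoke the variation-of-constants bound \eqref{eq:lmbnd} directly with $e_n(0)=0$ and $\dot e_n(0)=\eta_n(0)$, where the paper reuses \eqref{eq:enl2} with $\rho=0$ and evaluates $\dot e_n(0)=R_n\Pscr u_0-P_nR_nu_0$ via Lemma~\ref{lm:approxAq}; these are equivalent.
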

%%%%%%%%%%**********%%%%%%%%%%**********%%%%%%%%%%**********%%%%%%%%%%%
%----------------------------- End corollary--------------------------%
\begin{proof}
Note that $f^{(k)}(0)=0$ for all integers $k\geq0$. Using \cite[Ch. 4, Thm. 3.5 (ii)]{Pazy:1983} instead of \cite[Ch. 4, Thm. 3.5 (i)]{Pazy:1983} in the proof of Lemma \ref{lm:reg} it follows that the solution $u$ of \eqref{eq:heat1}-\eqref{eq:heat2} with initial state $u_0$ and input $f$ is in $C([0,T]; PC^{(2),1}[0,1])$. Furthermore, $\dot u \in C([0,T];C^1[0,1])$ and $u$ satisfies \eqref{eq:heat1}-\eqref{eq:heat2} for $t\geq0$. (The result \cite[Ch. 4, Thm. 3.5 (ii)]{Pazy:1983} enables us to replace $(0,T]$ in Lemma \ref{lm:reg} with $[0,T]$). Let $\bar u$ denote the solution of \eqref{eq:heat1}-\eqref{eq:heat2} with initial state $\Pscr u_0$ and input $\dot f$. Then, like $u$ above, we get $\bar u \in C([0,T]; PC^{(2),1}[0,1])\cap C^1([0,T]; C^1[0,1])$ satisfies \eqref{eq:heat1}, $\alpha_0 \bar u_x(1,t) + \beta_0 \bar u(1,t)=0$ and $\alpha_1 \bar u_x(1,t) + \beta_1 \bar u(1,t)=\dot f(t)$ for $t\in[0,T]$. Differentiating \eqref{eq:mildsoln} it is easy to check that the resulting formula for $\dot u$ is in fact the convolution formula for $\bar u$, i.e. $\bar u =\dot u$ and $\dot u$ has the same regularity as $\bar u$.

Recall $e_n(t)$ defined below \eqref{eq:initcont} and $\eta_n^1(t)$, $\eta_n^2(t)$, $\eta_n^3(t)$ introduced in \eqref{eq:einit}. Let $\eta_n(t) \!=\!\eta_n^1(t)+\eta_n^2(t)+\eta_n^3(t)$. Clearly $e_n(0)=0$ and $\dot e_n(0)=R_n \Pscr u_0 - P_n R_n u_0$. Since $u\in C([0,T];PC^{(2),1}[0,1])$, we can take $\rho=0$ in the proof of Theorem \ref{th:semidisc_conv} to conclude from \eqref{eq:enl2} that \vspace{-1.5mm}
\begin{equation} \label{eq:en_rho0}
  \lim_{n\to\infty}\sup_{t\in [0,T]} \|e_n(t)\|_{2d} =0. \vspace{-1.5mm}
\end{equation}
Using $\dot u\in C([0,T];PC^{(2),1}[0,1])$ and $\alpha_1 \dot u_x(1,t) + \beta_1 \dot u(1,t)=\dot f(t)$, we can replicate the arguments used to derive \eqref{eq:eta_conv} to conclude that $\lim_{n\to\infty} \sup_{t\in [0,T]}\|\dot\eta_n(t)\|_{2d} =0$. Consequently, taking the derivative of \eqref{eq:einit} and applying Proposition \ref{pr:eigenA} to the resulting equation, and inferring using Lemma \ref{lm:approxAq} that $\lim_{n\to\infty}\|\dot e_n(0)\|_{2d}=0$, we get \vspace{-1.4mm}
\begin{equation} \label{eq:doten_rho0}
 \lim_{n\to\infty}\sup_{t\in [0,T]} \|\dot e_n(t)\|_{2d} =0. \vspace{-1.6mm}
\end{equation}
Using Lemma \eqref{lm:approxAq} it is easy to mimic the steps used to derive \eqref{eq:eta_conv} to show that $\lim_{n\to\infty} \sup_{t\in[0,T]} \|\eta_n(t)\|_{2d}=0$. From this, \eqref{eq:doten_rho0}, \eqref{eq:einit} and $e_n, \dot e_n, \eta_n\in C([0,T];\rline^n)$ we get \vspace{-1.4mm}
\begin{equation}\label{eq:estAn3}
  \lim_{n\to\infty}\sup_{t\in [0,T]} \|P_n e_n(t)\|_{2d} =0. \vspace{-1.6mm}
\end{equation}
Using \eqref{eq:en_rho0} and \eqref{eq:estAn3}, and appealing to Lemma \ref{lm:disc_sob}, we get \vspace{-1.4mm}
\begin{equation}\label{eq:estDen8}
 \lim_{n\to\infty}\sup_{t\in [0,T]}\| e_n(t) \|_\infty = 0. \vspace{-1.6mm}
\end{equation}
for all $n\gg1$ and $t\in[0,T]$. Recall that $ R_n u(\cdot,t) - v_n(t) = e_n(t) - R_n \nu (f_n(t)-f(t))$. We obtain  \eqref{eq:semdisconv8b} from this, by bounding $e_n(t)$ using \eqref{eq:estDen8}, and bounding $R_n \nu (f_n(t)-f(t))$ using \eqref{eq:fn_limit} and $\|R_n \nu\|_{\infty}\leq \mu_1$ (see the definition of $\nu$ below \eqref{eq:mildsoln}). Observe that from the definition of $e_n$ we \vspace{-2mm} have
\begin{align}
 u(0,t) - v_{n,1}(t) &\,= u(0,t) - u(h,t) + e_{n,1}(t)\nonumber\\
  &\hspace{10mm} + \nu(h) (f(t)-f_n(t)), \label{eq:inff1}\\[-4.6ex]\nonumber
 %\\
% u_x(0,t) - q_0v_{n,1}(t) &\, = u_x(0,t)-q_0u(h,t) + q_0 e_{n,1}(t)\nonumber\\
% &\hspace{10mm} + q_0 \nu(h)(f(t)-f_n(t)). \label{eq:inff2}\\[-4ex]\nonumber
\end{align}
Since $u\in C([0,T];PC^{(2),1}[0,1])$ we get $\lim_{h\to0}|u(0,t)-u(h,t)|\!=\!0$ and from \eqref{eq:estDen8} we get $\lim_{n\to\infty}e_{n,1}(t)=0$, uniformly for $t\!\in\! [0,\!T]$. From the definition of $\nu$ we get $\lim_{h\to0}\! \nu(h)\!=\!0$. Using these limits in \eqref{eq:inff1} gives us \eqref{eq:semdisconv8}.
\end{proof}
%%%%%%%%%%**********%%%%%%%%%%**********%%%%%%%%%%**********%%%%%%%%%%
%%%%%%%%%%**********%%%%%%%%%%**********%%%%%%%%%%**********%%%%%%%%%%
\section{Numerical examples} \label{sec5} %Section 5

Consider the parabolic PIDE \eqref{eq:heat1}-\eqref{eq:heat2} with the following coefficients: $\theta(x)=1+x$ for $x\in[0,0.5)$, $\theta(x) = 2$ for $x\in [0.5,1]$, $\sigma(x)=2-2x$ for $x\in [0,0.3)$, $\sigma(x)=\sin(5\pi x)$ for $x\in [0.3,1]$, $\lambda(x)=e^{-5x}$ for $x\in [0,0.7)$, $\lambda(x)=2 x^4$ for $x\in [0.7,1]$, $\varphi(x,\tilde x)=1$ for all $x,\tilde x\in[0,1]$,  $\alpha_0=\beta_1 =1$ and $\beta_0=\alpha_1=0$. We present two numerical examples to illustrate our convergence results.

{\noindent\bf Example 1.} Fix $T=1$. The initial state of \eqref{eq:heat1}-\eqref{eq:heat2} is $u_0(x) = 0$ for $x\in [0,0.3] \cup [0.7,1]$ and $u_0(x)=0.5$ for $x\in (0.3,0.7)$. The input to the $n^{\rm th}$-order semi-discrete approximation \eqref{eq:semi_disc} is $f_n(t) = (1-1/n)e^{-t}\sin(\pi t)$ for $t\in [0,1]$ which converges to the input $f(t) = e^{-t}\sin(\pi t)$. Note that while Theorem \ref{th:semidisc_conv} is stated for $u_0\in C[0,1]$, the result remains true even when $u_0\in PC[0,1]$ as demonstrated here. The error $e_n^1 = \sup_{t\in [0,1]}\|S_{200}v_{200}(t)-S_n v_n(t)\|_{L^2(0,1)}$ for $n$ varying from 10 to 100 is shown in Figure \ref{fig1}. As expected from Theorem \ref{th:semidisc_conv}, the error $e_n^1$ converges to zero as $n$ approaches $\infty$. The solutions $S_nv_n$ for $n\in \{10,50,100\}$ are shown in Figure \ref{fig2}. \vspace{1mm}

%%%%%%%%%%%%%%%%%% *** *** FIGURE 1 *** *** %%%%%%%%%%%%%%%%%% *** %%%
\begin{figure}
\centerline{\includegraphics[scale=0.85]{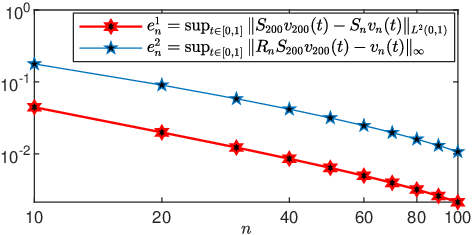}\vspace{-3mm}}
\caption{The errors $e_n^1$ and $e_n^2$ corresponding to the Examples 1 and 2 for $n\in \{10,20,\ldots 100\}$. The errors converge to zero as $n\to\infty$.}\label{fig1}\vspace{-4mm}
\end{figure}
%%%%%%%%%%%%%%%%%%% *** End of Figure 1 *** %%%%%%%%%%%%%%%%%%% *** %%%

{\noindent \bf Example 2.} Fix $T=1$. The initial for \eqref{eq:heat1}-\eqref{eq:heat2} is $u_0=0$. The input to the $n^{\rm th}$-order semi-discrete approximation \eqref{eq:semi_disc} is $f_n(t) \!=\! (1-1/n)e^{-(5t-5t^2)^{-2}}$ for $t\in (0,1)$ and $f_n(0)=f_n(1)=0$. Clearly, the limit $f$ of $f_n$ is given by $f(t)=  e^{-(5t-5t^2)^{-2}}$ for $t\in (0,1)$ and we have $f_n^{(k)}(0)=f^{(k)}(0)=0$ for all $k\geq 0$. So, the assumptions in Corollary \ref{cor:semidisc_conv} are satisfied. From Figure \ref{fig1} we can see that the error $e_n^2(t) = \sup_{t\in [0,1]}\|R_{200}S_{200} v_{200}(t)- v_n(t)\|_\infty$ converges to zero as $n$ approaches $\infty$ as expected from Corollary \ref{cor:semidisc_conv}. The solutions $S_n v_n$ for $n\in \{10,50,100\}$ are shown in Figure \ref{fig3}.

%%%%%%%%%%%%%%%%%% *** *** FIGURE 1 *** *** %%%%%%%%%%%%%%%%%% *** %%%
\begin{figure}
\centerline{\includegraphics[scale=0.55]{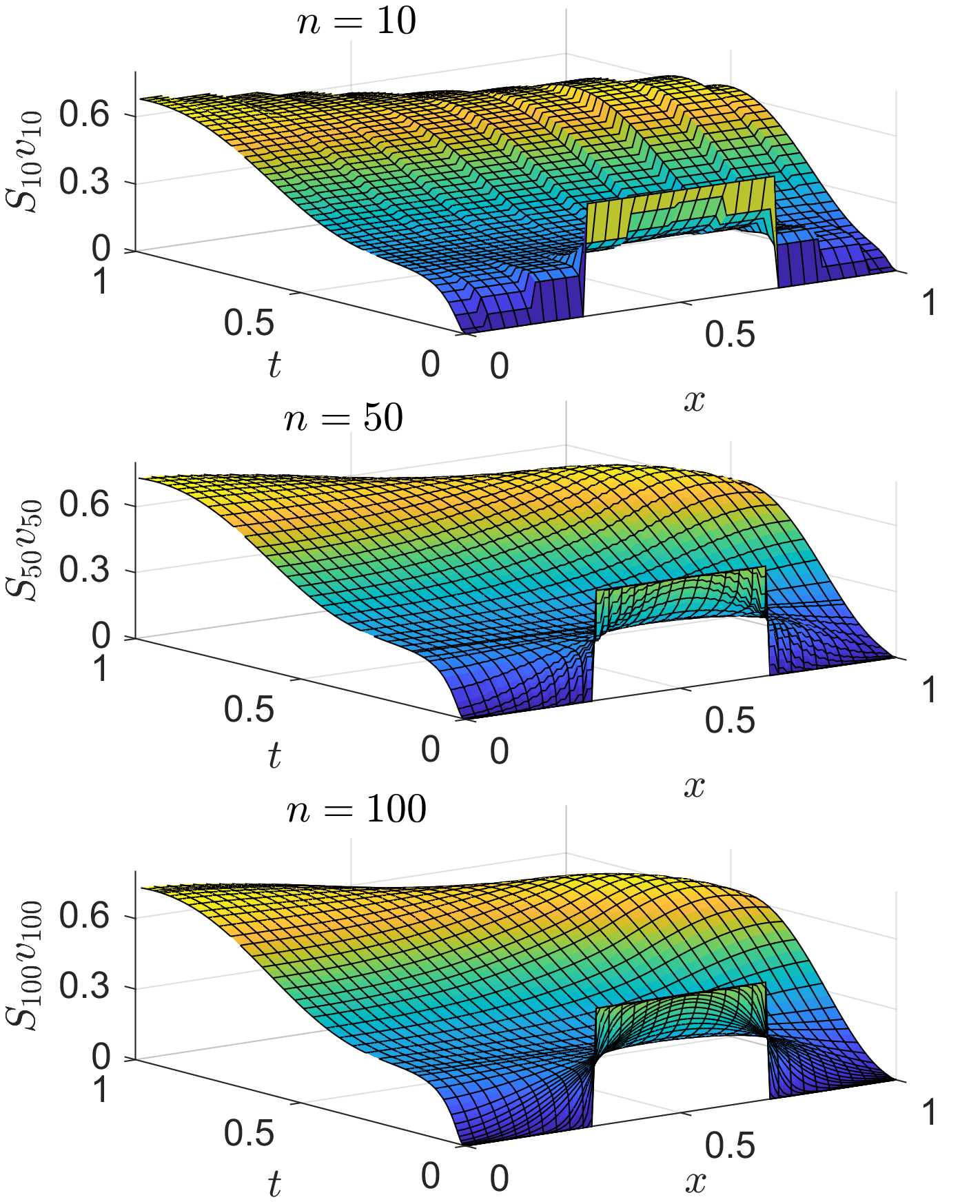}\vspace{-2.5mm}}
\caption{The solution $S_n v_n$ for $n\in \{10,50,100\}$ of the $n^{\rm th}$-order semi-discrete ODE \eqref{eq:semi_disc} for the initial state and input specified in  Example 1.}\label{fig2}\vspace{-6.5mm}
\end{figure}
%%%%%%%%%%%%%%%%%%% *** End of Figure 1 *** %%%%%%%%%%%%%%%%%%% *** %%%

%%%%%%%%%%%%%%%%%% *** *** FIGURE 1 *** *** %%%%%%%%%%%%%%%%%% *** %%%
\begin{figure}
\centerline{\includegraphics[scale=0.55]{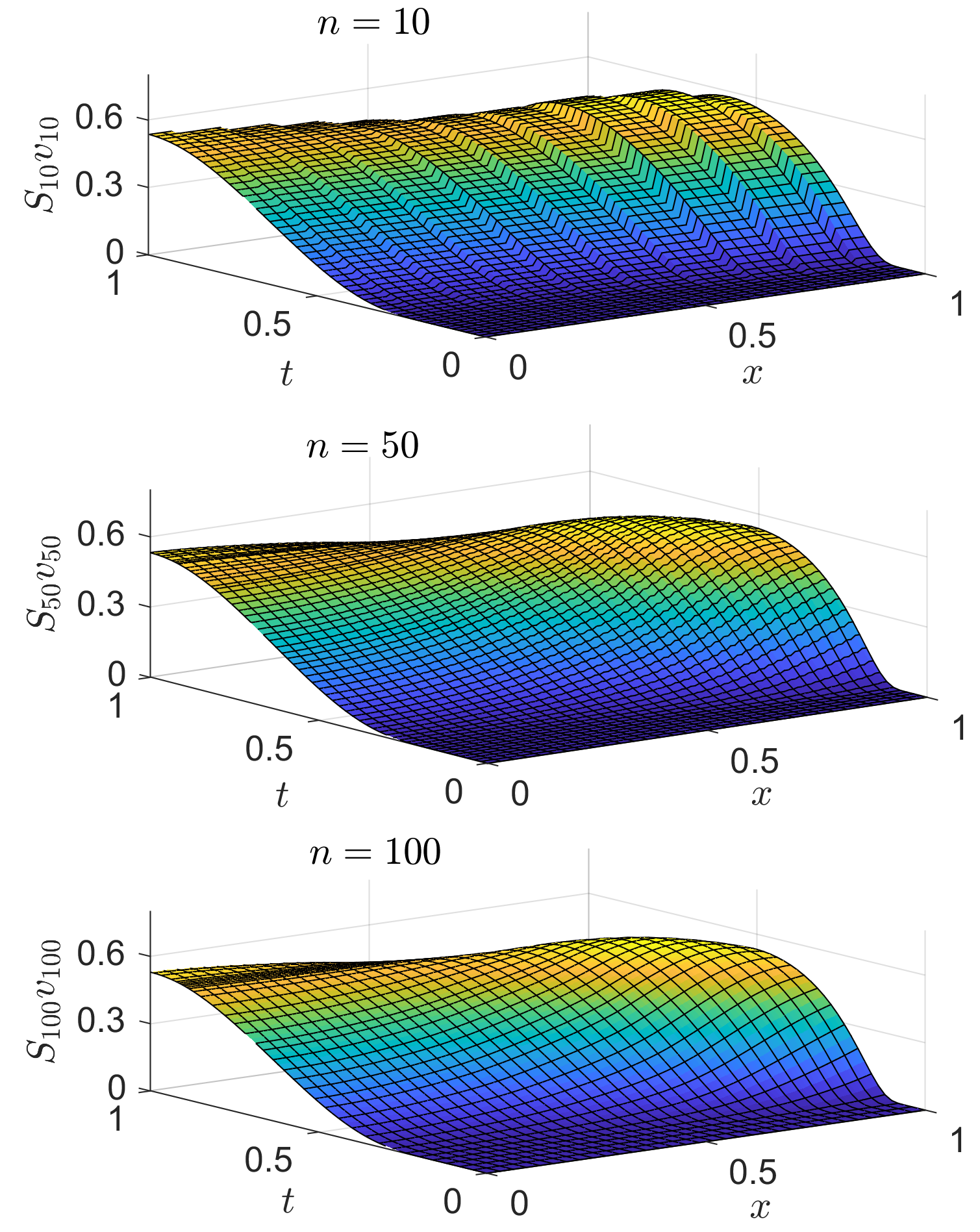}\vspace{-2.5mm}}
\caption{The solution $S_n v_n$ for $n\in \{10,50,100\}$ of the $n^{\rm th}$-order semi-discrete ODE \eqref{eq:semi_disc} for the initial state and input specified in Example 2.}\label{fig3} \vspace{-6.5mm}
\end{figure}
%%%%%%%%%%%%%%%%%%% *** End of Figure 1 *** %%%%%%%%%%%%%%%%%%% *** %%%

%\m\vspace{-mm}

\end{document}